%todo Arxiv版本可以放关于Csiszar的评论

\documentclass[conference]{IEEEtran}
\IEEEoverridecommandlockouts

% -- Fonts (KEY PART) --
\usepackage[utf8]{inputenc} 
\usepackage[T1]{fontenc}
% \usepackage{newtxtext,newtxmath}

% ----- packages for comparison table -----
\usepackage{booktabs}
\usepackage{tabularx}
\usepackage{array}
\usepackage{makecell}
\usepackage{threeparttable}
\usepackage[table]{xcolor}
\usepackage{pifont}
\usepackage{multirow}

% ----- table marks -----
% symbols
\newcommand{\cmark}{\ding{51}}
\newcommand{\xmark}{\ding{55}}

\newcommand{\pmark}{\ensuremath{\triangle}}
% colors
\definecolor{myblue}{RGB}{217,231,245}
\definecolor{myblue2}{RGB}{235,243,250}
\definecolor{mygray}{RGB}{245,245,245}
\definecolor{ourblue}{RGB}{210,226,242}
% ----- colors -----
\definecolor{HeaderBlue}{HTML}{17365D}
\definecolor{LightBlue}{HTML}{EEF5FB}
\definecolor{LightGreen}{HTML}{EDF7ED}
\definecolor{LightTeal}{HTML}{EAF7F7}
\definecolor{LightGray}{HTML}{F5F5F5}

% ----- table colors -----
\definecolor{PriorGray}{RGB}{247,247,247}
\definecolor{OursBlueA}{RGB}{226,239,252}
\definecolor{OursBlueB}{RGB}{207,226,245}
\definecolor{SoftRule}{RGB}{190,190,190}
% -- Math --
\usepackage{bm}
\usepackage{amssymb, amsthm, amsfonts, mathtools, marvosym}
\newtheorem{definition}{Definition}
\newtheorem{theorem}{Theorem}
\newtheorem{lemma}{Lemma}
\newtheorem{corollary}{Corollary}

\newtheorem{assumption}{Assumption}
\newtheorem{remark}{Remark}

\theoremstyle{plain}
\newtheorem{example}{Example}

\newcommand{\X}{\mathcal{X}}
\newcommand{\Y}{\mathcal{Y}}

\newcommand{\R}{\mathbb{R}}
\newcommand{\E}{\mathbb{E}}
\newcommand{\supp}{\mathrm{supp}}

\newcommand{\pc}{\mathcal{P}}
\newcommand{\D}{\mathrm{d}}

%\newcommand{\HS}{\mathrm{HS}}

 % negative version of \>
\newcommand{\dist}{\mathrm{dist}}

\makeatletter
\newsavebox\myboxA
\newsavebox\myboxB
\newlength\mylenA
\newcommand*\xoverline[2][0.75]{%
\sbox{\myboxA}{$\m@th#2$}%
\setbox\myboxB\null% Phantom box
\ht\myboxB=\ht\myboxA%
\dp\myboxB=\dp\myboxA%
\wd\myboxB=#1\wd\myboxA% Scale phantom
\sbox\myboxB{$\m@th\overline{\copy\myboxB}$}% Overlined phantom
\setlength\mylenA{\the\wd\myboxA}% calc width diff
\addtolength\mylenA{-\the\wd\myboxB}%
\ifdim\wd\myboxB<\wd\myboxA%
\rlap{\hskip 0.5\mylenA\usebox\myboxB}{\usebox\myboxA}%
\else
\hskip -0.5\mylenA\rlap{\usebox\myboxA}{\hskip 0.5\mylenA\usebox\myboxB}%
\fi}
\makeatother
% -- Misc --
\usepackage{graphicx}
\usepackage{cite}
\usepackage{hyperref} 
\usepackage{color}
\usepackage{xcolor}
\usepackage{subcaption}
\usepackage{extarrows}
\usepackage{url}
\usepackage{ifthen}

\def\BibTeX{{\rm B\kern-.05em{\sc i\kern-.025em b}\kern-.08em
    T\kern-.1667em\lower.7ex\hbox{E}\kern-.125emX}}
\begin{document}

% \title{Rate-distortion Theory on Non-compact Spaces with Lower Semi-continuous Distortion:\\
% A Concentration-compactness Approach
\title{Rate-distortion Theory with Lower Semi-continuous Distortion
on Noncompact Alphabets
% \thanks{Identify applicable funding agency here. If none, delete this.}
}

\author{
\IEEEauthorblockN{Jiayang Zou\textsuperscript{1,2}, Luyao Fan\textsuperscript{2}, Jiayang Gao\textsuperscript{2}, and Jia Wang\textsuperscript{2}}
\IEEEauthorblockA{\textsuperscript{1}Stanford University, Stanford, CA, USA; jyangzou@stanford.edu}
\IEEEauthorblockA{\textsuperscript{2}Shanghai Jiao Tong University, Shanghai, China; \{qiudao, fanluyao, gjy0515, jiawang\}@sjtu.edu.cn}

% \IEEEauthorblockN{1\textsuperscript{st} Given Name Surname}
% \IEEEauthorblockA{\textit{dept. name of organization (of Aff.)} \\
% \textit{name of organization (of Aff.)}\\
% City, Country \\
% email address or ORCID}
% \and
% \IEEEauthorblockN{2\textsuperscript{nd} Given Name Surname}
% \IEEEauthorblockA{\textit{dept. name of organization (of Aff.)} \\
% \textit{name of organization (of Aff.)}\\
% City, Country \\
% email address or ORCID}
% \and
% \IEEEauthorblockN{3\textsuperscript{rd} Given Name Surname}
% \IEEEauthorblockA{\textit{dept. name of organization (of Aff.)} \\
% \textit{name of organization (of Aff.)}\\
% City, Country \\
% email address or ORCID}
% \and
% \IEEEauthorblockN{4\textsuperscript{th} Given Name Surname}
% \IEEEauthorblockA{\textit{dept. name of organization (of Aff.)} \\
% \textit{name of organization (of Aff.)}\\
% City, Country \\
% email address or ORCID}
% \and
% \IEEEauthorblockN{5\textsuperscript{th} Given Name Surname}
% \IEEEauthorblockA{\textit{dept. name of organization (of Aff.)} \\
% \textit{name of organization (of Aff.)}\\
% City, Country \\
% email address or ORCID}
% \and
% \IEEEauthorblockN{6\textsuperscript{th} Given Name Surname}
% \IEEEauthorblockA{\textit{dept. name of organization (of Aff.)} \\
% \textit{name of organization (of Aff.)}\\
% City, Country \\
% email address or ORCID}
}

\maketitle

\begin{abstract}
    In this paper, we study rate-distortion theory for general sources with an emphasis on the existence of optimal reconstruction distributions on noncompact alphabets. Classical attainability results typically rely on compactness of the reproduction alphabet together with continuity of the distortion function, which may fail in many noncompact settings. We identify two complementary existence mechanisms under lower semi-continuity on locally compact Polish alphabets. For bounded distortions, we prove that the rate-distortion infimum is attained via the one-point compactification argument. For unbounded coercive distortions, we establish existence via concentration-compactness. We also give several counterexamples showing that our attainability results are close to sharp. Our results provide a unified and transparent existence theorem for rate-distortion problems with lower semi-continuous distortions.
\end{abstract}

\section{Introduction}
\label{sec: RD_modern_intro}

Rate-distortion theory is a central branch of information theory, concerned with characterizing the fundamental limits of source compression when a prescribed level of distortion is permitted. 
Given a source random variable $X$ with probability distribution $p(x)$ and a distortion (or loss) function $\rho(x,y)$, the rate-distortion function $R(D)$ is defined as the minimum information rate required to represent the source subject to an average distortion constraint $D$:
\begin{equation}
    R(D)=\inf_{p(y|x)} I(X;Y),
\end{equation}
where the infimum is taken over all conditional distributions (or disintegrations) satisfying the fidelity constraint
\begin{equation}
    \E[\rho(X,Y)]\leqslant D.
\end{equation}
Classical formulations of the rate-distortion problem primarily focus on finite-alphabet sources or continuous sources defined on Euclidean spaces. Comprehensive treatments can be found in the monographs of Berger \cite{berger2003rate}, Blahut \cite{blahut1972computation,blahut1987principles}, and Cover and Thomas \cite{cover2006elements}.

To place rate-distortion theory on a rigorous mathematical footing, Csisz\'ar established in his seminal work \cite{csiszar1974extremum} a collection of fundamental properties of $R(D)$ and its parametric representations under remarkably weak assumptions. 
These results provide a foundational framework for source coding problems on general abstract spaces.
Further contributions were made by E. Riegler \emph{et al.} \cite{riegler2018rate, riegler2023lossy}, who focused on lossy compression theory for sources defined on compact manifolds and fractal sets. By leveraging results related to the rate-distortion dimension \cite{kawabata1994rate}, they derived the Shannon lower bound (SLB) in more general settings.
In terms of the parametric representation of the RD function, V. Kostina and E. Tuncel \cite{kostina2019successive} streamlined I. Csisz\'ar's argument by utilizing the Donsker-Varadhan characterization of minimum relative entropy, which generalized the earlier results of Equitz and Cover \cite{equitz1991successive}. More recently, connections between rate-distortion theory and optimal weak transport have attracted increasing attention \cite{gozlan2017kantorovich,zou2025revisitratedistortionproblemsoptimal}.

For existence of optimal reconstructions in general settings, classical compactness-based arguments \cite{csiszar1974extremum} relying on Prokhorov's theorem no longer apply directly once the reproduction alphabet is noncompact. As a consequence, minimizing sequences of the rate-distortion functional may fail to admit convergent subsequences, exhibiting pathological behaviors such as mass escaping to infinity or splitting into widely separated components. A significant step toward relaxing compactness assumptions was made by Rezaei \emph{et al.} \cite{rezaei2006rate}, who developed a weak-* compactified existence theory under continuity assumptions on the distortion. In the noncompact case, their formulation effectively passes to a relaxed compactified problem; this, however, does not by itself imply the existence of an optimal reconstruction on the original reproduction alphabet.

It is also worth noting that there are important specialized existence results in Euclidean settings. In particular, for finite-dimensional Euclidean alphabets and distortions of the form
$d(x,y)=\|x-y\|^r,(r>0)$,
Mao \emph{et al.} \cite{mao2021rateconstrained} recorded that an achieving distribution exists whenever $\E\|X\|^r<\infty$. This covers, for example, the classical squared-error distortion on $\mathbb R^n$. However, such arguments are tailored to Euclidean norm-power distortions and do not provide a general original-space existence theory for noncompact abstract alphabets. 
% This gap motivates the present work, where we investigate the existence of optimal reconstructions under lower semi-continuity assumptions that are strictly weaker than continuity.

In this paper, we separate two complementary regimes. For bounded distortions on a noncompact locally compact Polish alphabet, we show that a one-point compactification together with a compactification-dominance condition yields attainability for the rate-distortion problem. This covers the Hamming distortion or the $0$-$1$ loss. For unbounded coercive distortions, we return to the concentration-compactness principle of Lions \cite{lions1984concentration1,lions1984concentration2,lions1985concentration1,lions1985concentration2}. By combining concentration-compactness with a compatible Heine-Borel metric on every locally compact Polish alphabet, we establish existence of the optimal reconstruction. This covers the classical squared-error distortion on $\mathbb R^n$ and other unbounded coercive distortions, such as the Itakura-Saito distortion \cite{Watanabe2020discrete}, the hard-thresholding distortion in JND settings. We also give several counterexamples showing that our results are close to sharp.
Table~\ref{tab:existence-comparison} summarizes how these two regimes compare with the main earlier existence results.

\begin{table*}[t]
\centering
\caption{Comparison with related existence results for optimal reconstructions. The first column lists the relevant prior works and the two theorem regimes proved in this paper. ``Noncompact reproduction alphabet'' asks whether the reproduction space may be noncompact. ``Original-space optimizer'' asks whether the result yields an optimizer on the original reproduction alphabet. ``l.s.c. distortion'' asks whether lower semi-continuous distortions are covered. ``\(0\)-\(1\) / Hamming'' asks whether exact \(0\)-\(1\) or countable Hamming distortions are included. ``General non-Euclidean coercive losses'' asks whether coercive losses beyond Euclidean norm-power models are treated. ``Tightness of minimizing laws'' asks whether tightness or compactness of minimizing output laws is proved. We write \cmark for yes, \xmark for no or not addressed, and \pmark for partial coverage or special cases. The row ``Specialized Euclidean norm-power results'' refers to finite-dimensional Euclidean models such as squared error; see, for example, \cite{mao2021rateconstrained}. The row for Rezaei et al.~\cite{rezaei2006rate} concerns a weak-\(*\)/compactified relaxed formulation, so it does not by itself yield an ordinary optimizer on the original reproduction alphabet.}
\label{tab:existence-comparison}
\small
\setlength{\aboverulesep}{0pt}
\setlength{\belowrulesep}{0pt}
\setlength{\extrarowheight}{0.3ex}
\setlength{\tabcolsep}{4pt}
\renewcommand{\arraystretch}{1.08}
\begin{tabular}{
>{\raggedright\arraybackslash}m{3.00cm}
>{\centering\arraybackslash}m{1.75cm}
>{\centering\arraybackslash}m{2.05cm}
>{\centering\arraybackslash}m{1.90cm}
>{\centering\arraybackslash}m{1.50cm}
>{\centering\arraybackslash}m{2.00cm}
>{\centering\arraybackslash}m{1.90cm}
}
\toprule
\makecell[c]{\textbf{Result / work}}
&
\makecell[c]{\textbf{Noncompact}\\ \textbf{reproduction}\\ \textbf{alphabet}}
&
\makecell[c]{\textbf{Original-space}\\ \textbf{optimizer}}
&
\makecell[c]{\textbf{l.s.c. }\\ \textbf{distortion}}
&
\makecell[c]{\textbf{\(0\)-\(1\) /}\\ \textbf{Hamming}}
&
\makecell[c]{\textbf{General}\\ \textbf{coercive}\\ \textbf{losses}}
&
\makecell[c]{\textbf{Tightness}\\ \textbf{of minimizing}\\ \textbf{laws}}
\\
\midrule
\multicolumn{7}{l}{\textbf{Prior literature}}\\
\midrule

\rowcolor{PriorGray}
Csisz\'ar~\cite{csiszar1974extremum}
& \xmark & \cmark & \xmark & \xmark & \xmark & \cmark
\\

\rowcolor{PriorGray}
Rezaei et al.~\cite{rezaei2006rate}
& \cmark & \xmark & \xmark & \xmark & \cmark & \xmark
\\

\rowcolor{PriorGray}
\makecell[l]{Specialized Euclidean\\ norm-power results}
& \cmark & \cmark & \xmark & \xmark & \xmark & \pmark
\\

\midrule
\multicolumn{7}{l}{\textbf{This paper}}\\
\midrule

\rowcolor{OursBlueB}
\makecell[l]{\textbf{Bounded regime}\\ Theorem~\ref{thm:bounded-itw}}
& \cmark & \cmark & \cmark & \cmark & \xmark & \xmark
\\

\rowcolor{OursBlueB}
\makecell[l]{\textbf{Unbounded regime}\\ Theorem~\ref{thm:coercive-itw}}
& \cmark & \cmark & \cmark & \xmark & \cmark & \cmark
\\

\bottomrule
\end{tabular}
\end{table*}

The remainder of this paper is organized as follows.
Section~\ref{sec: RD_recap} provides the problem setting with lower semi-continuous distortion.
In Section~\ref{sec: main_results}, we present two complementary existence theorems: a bounded compactification theorem and a coercive concentration-compactness theorem on locally compact Polish alphabets with several counterexamples to demonstrate the sharpness of our results. In Section~\ref{sec: Discussion}, we conclude with a discussion of
potential extensions and applications of our framework.

\section{The Problem Setting}
\label{sec: RD_recap}

We begin by imposing some basic regularity conditions on the distortion function, following \cite{csiszar1974extremum}.

\begin{assumption}\label{assumption: distortion function basic requirement}
Let $\rho:\X\times\Y\to [0,\infty]$ be a measurable function, referred to as the loss function. Assume that
\begin{equation}
    \inf_{y\in \Y} \rho(x,y)=0, \quad \forall x\in\X.
\end{equation}
\end{assumption}

\begin{assumption}\label{assumption: existence of finite set}
Assume that there exists a finite set $B\subset \Y$ such that
\[
\int \rho(x,B)\,\mu(\mathrm{d}x)<\infty,
\qquad\text{where}\qquad
\rho(x,B)\coloneqq \min_{y\in B}\rho(x,y).
\]
Then one may choose a finite-valued reconstruction $Y=y(X)$ satisfying
\[
\rho(X,Y)=\rho(X,B)\quad\text{a.s.},
\]
and consequently
\[
\E \rho(X,Y)<\infty,
\qquad
I(X;Y)\le H(Y)<\infty.
\]
Thus Assumption~\ref{assumption: existence of finite set} guarantees finite-information feasibility.
\end{assumption}

Based on these assumptions, Gray provided a systematic reformulation of Csisz\'ar's theory, recasting $R(D)$ into a compact variational form.

\begin{theorem}{}[Theorem 9.2 in \cite{gray2011entropy}]
\label{thm: parametric_representation_Gray}
If $R(D)<\infty$, then the rate-distortion function admits the representation
\begin{equation}
\begin{aligned}
R(D)&=\max_{\beta\geqslant 0}\left\{F(\beta)-\beta D\right\},\\
F(\beta)&=\inf_{\nu\in\mathcal{P}(\Y)}
\int \log \frac{1}{\int e^{-\beta\rho(x,y)}\,\mathrm{d}\nu}\,\mathrm{d}\mu,
\end{aligned}
\end{equation}
where the infimum is taken over all probability measures $\nu$ on the reproduction space $\Y$.
This formulation reduces the computation of $R(D)$ to a coupled optimization over the output distribution $\nu$ and the Lagrange multiplier $\beta$.
\end{theorem}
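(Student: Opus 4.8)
The plan is to read the claimed identity as a Lagrangian duality statement for the convex program $R(D)=\inf_{p(y|x):\,\E[\rho(X,Y)]\le D}I(X;Y)$, with $F(\beta)$ produced by the Gibbs (Donsker--Varadhan) variational principle, much in the spirit of the streamlined argument of Kostina--Tuncel. The computational heart is the partial minimization, for a fixed $\beta\ge 0$,
\[
\Phi(\beta)\coloneqq\inf_{p(y|x)}\Bigl\{I(X;Y)+\beta\,\E[\rho(X,Y)]\Bigr\},
\]
the infimum being over all Markov kernels from $\X$ to $\Y$. Writing mutual information through the golden formula $I(X;Y)=\inf_{\nu\in\pc(\Y)}D\bigl(\Law(X,Y)\,\|\,\mu\otimes\nu\bigr)$ and exchanging the (joint) infima over $\nu$ and over the kernel, the problem disintegrates in $x$: for each fixed $x$ one minimizes $D\bigl(p(\cdot|x)\,\|\,\nu\bigr)+\beta\int\rho(x,y)\,p(\D y|x)$ over $p(\cdot|x)\in\pc(\Y)$. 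Since $\rho\ge 0$ we have $\int e^{-\beta\rho(x,y)}\,\nu(\D y)\in[0,1]$, so the Gibbs variational principle yields the optimal value $-\log\int e^{-\beta\rho(x,y)}\,\nu(\D y)$, attained at the exponential tilt $p^\star(\D y|x)\propto e^{-\beta\rho(x,y)}\,\nu(\D y)$, which depends measurably on $x$ because $\rho$ is measurable. Integrating against $\mu$ and then optimizing over $\nu$ gives $\Phi(\beta)=F(\beta)$.

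Next I would connect $F$ to $R$ by a Legendre transform. For $\beta>0$ any kernel with $\E[\rho]=\infty$ contributes $+\infty$ to $\Phi(\beta)$, so one restricts to kernels of finite expected distortion and stratifies the infimum by the level $D'=\E[\rho(X,Y)]$; using $R(D')=\inf_{p:\,\E[\rho]\le D'}I$ this yields
\[
F(\beta)=\inf_{D'\ge 0}\bigl\{R(D')+\beta D'\bigr\},\qquad\beta>0,
\]
so that $-F$ is, up to domain bookkeeping, the Legendre--Fenchel conjugate of $R$ restricted to $[0,\infty)$. Taking $D'=D$ already gives the weak-duality bound $R(D)\ge F(\beta)-\beta D$ for every $\beta\ge 0$, hence $R(D)\ge\sup_{\beta\ge 0}\{F(\beta)-\beta D\}$. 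If $R(D)=0$ we are done: $F(0)=0$ (since $\int e^{0}\,\D\nu=1$), so the supremum is both $\le R(D)=0$ and $\ge F(0)-0=0$, and it is attained at $\beta=0$.

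It remains to treat $0<R(D)<\infty$ and to upgrade weak duality to equality with attainment. Here I would use that $R$ is convex and non-increasing (the lemma above, Csisz\'ar's Lemma~1.1). If $D>D_{\min}$, then $D$ lies in the interior of the effective domain $\{R<\infty\}$ (an interval with left endpoint $D_{\min}$), so $R$ admits there a supporting line of slope $-\beta^\star$ with $\beta^\star\ge 0$ (nonpositive slope by monotonicity): $R(D')\ge R(D)-\beta^\star(D'-D)$ for all $D'$, equivalently $R(D')+\beta^\star D'\ge R(D)+\beta^\star D$ for all $D'$. Thus the infimum in the Legendre identity is attained at $D'=D$, giving $F(\beta^\star)=R(D)+\beta^\star D$, i.e.\ $F(\beta^\star)-\beta^\star D=R(D)$; combined with weak duality this is exactly $R(D)=\max_{\beta\ge 0}\{F(\beta)-\beta D\}$. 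The endpoint case $D=D_{\min}$ with $R(D_{\min})<\infty$ is handled by letting $D\downarrow D_{\min}$ and using right-continuity of the convex function $R$ together with a uniform control of the near-optimal multipliers; the maximum is replaced by a supremum precisely when the supporting line degenerates to the vertical.

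The Gibbs principle and the golden formula are classical, so the main obstacle is the rigorous execution of the first step in the abstract Polish setting: simultaneously exchanging the infima over $\nu$ and over the kernel, disintegrating the relative entropy under the marginal constraint $\Law(X)=\mu$, and constructing a genuinely \emph{measurable} Gibbs kernel $p^\star(\cdot|x)$ — all while carefully tracking the extended-real values of $\rho$ and of the entropies, in particular the degenerate case $\int e^{-\beta\rho(x,\cdot)}\,\D\nu=0$, for which the relevant integrand equals $+\infty$. A secondary delicate point is the existence of the supporting hyperplane in the last step, which is the reason one works in the interior of $\{R<\infty\}$ and treats the left endpoint $D_{\min}$ by a separate limiting argument.
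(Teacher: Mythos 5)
The paper does not prove this theorem---it is imported as background from Gray's monograph (Theorem~9.2), so there is no internal proof to compare against. Your argument instead follows the Donsker--Varadhan route of Kostina and Tuncel, which the introduction itself describes as a streamlining of Csisz\'ar's KKT-based derivation; Gray's Theorem~9.2, like Csisz\'ar's Theorem~2.3, is obtained by maximizing over multiplier functions $\alpha(x)\geq 1$ and $\beta\geq 0$ subject to $\int\alpha(x)e^{-\beta\rho(x,y)}\,\D\mu\leq 1$. So your route is genuinely different but well established; it buys a cleaner separation between the measure-theoretic step (Gibbs/golden formula giving $F(\beta)=\Phi(\beta)$) and the purely one-dimensional convex-duality step (Legendre transform of $R$), at the cost of having to justify the joint infimum exchange and a measurable Gibbs kernel rather than invoking KKT.

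The substance is correct. The chain $I(X;Y)=\inf_\nu D(\Law(X,Y)\,\|\,\mu\otimes\nu)$, exchange of infima, disintegration, Gibbs principle $\inf_q\{D(q\|\nu)+\beta\int\rho(x,\cdot)\,\D q\}=-\log\int e^{-\beta\rho(x,\cdot)}\,\D\nu$, and the Legendre identity $F(\beta)=\inf_{D'}\{R(D')+\beta D'\}$ for $\beta>0$ (with $\beta=0$ by inspection) are all valid. The measurable-selection worry you flag is benign: the pointwise minimizer has the explicit density $e^{-\beta\rho(x,y)}\big/\int e^{-\beta\rho(x,\cdot)}\,\D\nu$, jointly measurable by Tonelli, and the degenerate set $\{x:\int e^{-\beta\rho(x,\cdot)}\,\D\nu=0\}$ contributes $+\infty$ consistently to both sides. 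The supporting-hyperplane step gives equality with attainment at $\beta^\star$ for $D>D_{\min}$, and your separate treatment of $R(D)=0$ closes that branch.

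The one loose thread is the endpoint $D=D_{\min}$ (with $R(D_{\min})<\infty$): if the subdifferential of $R$ there is empty (vertical supporting line), the supremum over $\beta\geq 0$ is \emph{not} attained and the theorem's ``$\max$'' must be read as a ``$\sup$''. Csisz\'ar's corresponding Theorem~2.3, quoted earlier in the paper, is explicitly restricted to $D>D_{\min}$ for precisely this reason; Gray's hypothesis ``$R(D)<\infty$'' carries the same implicit caveat, and your limiting argument is the honest way to state it. This is a defect of the theorem statement rather than of your proof, but it is worth saying explicitly if you intend this as a self-contained derivation.
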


In classical existence theorem of optimal reconstructions, compactness of the reproduction space plays a decisive role. Once the reproduction alphabet becomes noncompact, minimizing sequences of $F(\beta)$ in Theorem~\ref{thm: parametric_representation_Gray} may not admit convergent subsequences in the original space (the term "original" refers to the original $\Y$ compared to the compactified space in \cite{rezaei2006rate}). We study original-space attainability under lower semi-continuity on noncompact alphabets through two complementary regimes: bounded distortions handled by compactification and unbounded coercive distortions handled by concentration-compactness.

The following examples motivate this distinction and illustrate distortions beyond the reach of compact-alphabet arguments or specialized Euclidean norm-power existence results.

%~~~~~~~~~~~~~~~~~~~~~~~~~~~~~~~~~~~~~

\begin{itemize}

\item \textbf{Exact $0$-$1$ loss / countable Hamming distortion.}
Let $\X=\Y$ be a locally compact Polish space and define
\[
\rho_0(x,y)=\mathbf{1}_{\{x\neq y\}}.
\]
Since Polish spaces are Hausdorff, the diagonal
\[
\Delta=\{(x,x):x\in\X\}
\]
is closed in $\X\times\X$. Hence $\{(x,y):x\neq y\}$ is open, and $\rho_0$ is bounded and lower semi-continuous. 
% This distortion models exact symbolic mismatch and, on countable alphabets, coincides with the classical Hamming distortion. It belongs to the bounded compactification regime. Indeed, if $\Y^\ast=\Y\cup\{\infty\}$ is the one-point compactification, then the extension
% \[
% \rho_0^\ast(x,y)=
% \begin{cases}
% \mathbf{1}_{\{x\neq y\}}, & y\in\Y,\\
% 1, & y=\infty,
% \end{cases}
% \]
% is lower semi-continuous on $\X\times\Y^\ast$ and satisfies the compactification-dominance condition for every $y_0\in\Y$.

\item \textbf{Perceptual coding with a dead-zone threshold.}
In perceptual image and video compression, the concept of \emph{just noticeable difference} (JND) suggests that reconstruction errors below a visibility threshold $\tau$ should incur no penalty \cite{donoho1995noising}. For instance, one may consider
\[
    \rho_{\mathrm{DZ}}(x,y)
    =
    \|x-y\|^2\,\mathbf{1}_{\{\|x-y\|>\tau\}}.
\]
This function is lower semi-continuous but discontinuous at the threshold boundary $\|x-y\|=\tau$. It is unbounded and coercive on $\mathbb R^n$, and therefore falls under the concentration-compactness regime.

\item \textbf{Itakura--Saito distortion.}
For positive-valued sources, the scalar Itakura--Saito distortion is defined on $\X=\Y=(0,\infty)$ by
\[
\rho_{\mathrm{IS}}(x,y)
=x/y
-
\log(x/y)
-
1,
\qquad x,y>0.
\]
% More generally, on the positive orthant $(0,\infty)^d$, one may take the separable form
% \[
% \rho_{\mathrm{IS}}(x,y)
% =
% \sum_{i=1}^d
% \left(
% \frac{x_i}{y_i}
% -
% \log\frac{x_i}{y_i}
% -
% 1
% \right).
% \]
This distortion is topologically coercive and continuous, but it is not a Euclidean norm-power distortion. It is widely used in audio and speech processing \cite{Watanabe2020discrete}.

\item \textbf{Distortion with penalty jumps.}
In robust estimation and discontinuous optimization, one often imposes a large penalty once the reconstruction error exceeds a prescribed tolerance \cite{conn1998discontinuous, mongeau2008discontinuous, imo1984discontinuous, fourer1985simplex}. A representative example is
\[
\rho(x, y) = 
\begin{cases} 
\|x - y\|^2, & \text{if } \|x - y\| \le \tau_1, \\[0.3em]
\|x - y\|^2 + J, & \text{if } \|x - y\|>\tau_1,
\end{cases}
\]
where $J>0$ is a large constant. It is lower semi-continuous, and its compact sublevel sets place it in the coercive regime.

\end{itemize}

\section{Main Results}
\label{sec: main_results}

We now distinguish two mechanisms guaranteeing original-space attainability on a noncompact reproduction alphabet.
\subsection{Bounded lower semi-continuous distortions}

Assume throughout this subsection that $\Y$ is a noncompact locally compact Polish space. 
Since $\Y$ is a locally compact Hausdorff space and hence, in particular, a Tychonoff space, 
we write
\[
\Y^\ast:=\Y\cup\{\infty\}
\]
for the Alexandroff one-point compactification of $\Y$. 
The topology on $\Y^\ast$ is defined by retaining the original open sets of $\Y$ and by taking the sets
\[
\{\infty\}\cup(\Y\setminus K),
\qquad K\subset\Y\ \text{compact},
\]
as a neighborhood basis at the added point $\infty$. 
With this topology, $\Y^\ast$ is compact Hausdorff; since $\Y$ is locally compact Polish, $\Y^\ast$ is also compact metrizable. 
The symbol $\infty$ denotes only the compactification boundary point and does not presuppose any Euclidean or order structure on $\Y$.

\begin{definition}[Compactified extension and compactification-dominance]
\label{def:CD-itw}
Let $\rho:\X\times\Y\to[0,M]$ be a bounded distortion. 
We say that $\rho$ admits a bounded lower semi-continuous compactified extension if there exists a Borel measurable function
\[
\rho^\ast:\X\times\Y^\ast\to[0,M]
\]
such that
\[
\rho^\ast(x,y)=\rho(x,y),
\qquad (x,y)\in\X\times\Y,
\]
and for $\mu$-almost every $x\in\X$, the section
\[
y\mapsto \rho^\ast(x,y)
\]
is lower semi-continuous on $\Y^\ast$.

Given such an extension $\rho^\ast$, we say that the pair $(\rho,\rho^\ast)$ satisfies the \emph{compactification-dominance condition} if there exists $y_0\in\Y$ such that
\begin{equation}
\rho(x,y_0)\le \rho^\ast(x,\infty),
\qquad \mu\text{-a.e. }x.
\label{eq:CD-itw}
\end{equation}
\end{definition}

\begin{remark}
In general one cannot set $\rho^\ast(x,\infty)=\sup_{y\in\Y}\rho(x,y)$, since lower semi-continuity at $\infty$ would require $\sup_{y\in\Y}\rho(x,y)\le\liminf_{n\to\infty}\rho(x,y_n)$ for every sequence $y_n\to\infty$, which may fail. The compactification-dominance condition means that any boundary mass at $\infty$ can be pushed back to some $y_0\in\Y$ without increasing the objective.
\end{remark}

\begin{theorem}[]
\label{thm:bounded-itw}
Assume that:
\begin{enumerate}
    \item $\X$ is Polish, $\mu\in\pc(\X)$, and $\Y$ is a noncompact locally compact Polish space;
    \item $\rho:\X\times\Y\to[0,M]$ admits a bounded lower semi-continuous compactified extension
    \[
    \rho^\ast:\X\times\Y^\ast\to[0,M]
    \]
    in the sense of Definition~\ref{def:CD-itw};
    \item the pair $(\rho,\rho^\ast)$ satisfies the compactification-dominance condition \eqref{eq:CD-itw}.
\end{enumerate}
Then, for every $\beta\ge0$, the output-law functional
\[
J_\beta(\nu):=-\int_\X\log\!\left(\int_\Y e^{-\beta\rho(x,y)}\,\nu(\D y)\right)\mu(\D x)
\]
attains its infimum on $\pc(\Y)$.
\end{theorem}

\begin{proof}[Proof sketch]
See Appendix~\ref{app:bounded-itw} for details. The case $\beta=0$ is trivial, since then $J_0(\nu)=0$ for every $\nu\in\pc(\Y)$. 
Assume $\beta>0$. 
First minimize the compactified functional
\[
J_\beta^\ast(\nu):=
-\int_\X
\log\!\left(
\int_{\Y^\ast} e^{-\beta\rho^\ast(x,y)}\,\nu(\D y)
\right)\mu(\D x),\; \nu\in\pc(\Y^\ast).
\]
Since $\Y^\ast$ is compact, $\pc(\Y^\ast)$ is weakly compact.

We claim that $J_\beta^\ast$ is weakly lower semi-continuous. 
Let $\nu_n\rightharpoonup\nu$ in $\pc(\Y^\ast)$. For $\mu$-a.e. $x$, the function
\[
y\mapsto \rho^\ast(x,y)
\]
is lower semi-continuous, hence
\[
y\mapsto e^{-\beta\rho^\ast(x,y)}
\]
is bounded upper semi-continuous. Therefore, by the Portmanteau theorem,
\[
\int_{\Y^\ast}e^{-\beta\rho^\ast(x,y)}\,\nu(\D y)
\ge
\limsup_{n\to\infty}
\int_{\Y^\ast}e^{-\beta\rho^\ast(x,y)}\,\nu_n(\D y).
\]
Because $z\mapsto-\log z$ is decreasing and continuous on $[e^{-\beta M},1]$, it follows that
\begin{equation}
    \begin{aligned}
    -\log\!&\left(
\int_{\Y^\ast}e^{-\beta\rho^\ast(x,y)}\,\nu(\D y)
\right)\\
&\le \liminf_{n\to\infty}
-\log\!\left(
\int_{\Y^\ast}e^{-\beta\rho^\ast(x,y)}\,\nu_n(\D y)
\right)
    \end{aligned}
\end{equation}
for $\mu$-a.e. $x$. Fatou's lemma gives
\[
J_\beta^\ast(\nu)
\le
\liminf_{n\to\infty}J_\beta^\ast(\nu_n).
\]
Thus $J_\beta^\ast$ attains a minimizer $\nu^\ast\in\pc(\Y^\ast)$.

Let $y_0$ be given by the compactification-dominance condition, and define
\[
T:\Y^\ast\to\Y,
\qquad
T(y)=y\ \text{for }y\in\Y,
\qquad
T(\infty)=y_0.
\]
Set
\[
\widehat\nu:=T_\#\nu^\ast\in\pc(\Y).
\]
Since $\rho^\ast$ extends $\rho$ on $\X\times\Y$ and \eqref{eq:CD-itw} holds at the boundary point, we have
\[
\rho(x,T(y))\le \rho^\ast(x,y),
\qquad \mu\text{-a.e. }x,\quad \nu^\ast\text{-a.e. }y.
\]
Hence
\[
e^{-\beta\rho(x,T(y))}
\ge
e^{-\beta\rho^\ast(x,y)}.
\]
Therefore, for $\mu$-a.e. $x$,
\begin{equation}
    \begin{aligned}
        \int_{\Y}e^{-\beta\rho(x,z)}\,\widehat\nu(\D z)
&=
\int_{\Y^\ast}e^{-\beta\rho(x,T(y))}\,\nu^\ast(\D y)\\
&\ge \int_{\Y^\ast}e^{-\beta\rho^\ast(x,y)}\,\nu^\ast(\D y).
    \end{aligned}
\end{equation}
Taking $-\log$ and integrating over $\mu$ yields
\[
J_\beta(\widehat\nu)\le J_\beta^\ast(\nu^\ast).
\]
Since every $\nu\in\pc(\Y)$ is also a probability measure on $\Y^\ast$ with no mass at $\infty$, one has
\[
\inf_{\pc(\Y^\ast)}J_\beta^\ast
\le
\inf_{\pc(\Y)}J_\beta.
\]
The previous inequality gives the reverse inequality through the pushforward $\widehat\nu$. Hence
\[
J_\beta(\widehat\nu)=\inf_{\nu\in\pc(\Y)}J_\beta(\nu),
\]
and the infimum is attained on the original alphabet $\Y$.
\end{proof}

% \begin{remark}
% The lower semi-continuity assumption in Theorem~\ref{thm:bounded-itw} is only sectionwise in the reproduction variable: for $\mu$-a.e. $x$, the map $y\mapsto \rho^\ast(x,y)$ is lower semi-continuous on $\Y^\ast$. This is sufficient because the compactness argument is applied to output laws $\nu\in\pc(\Y^\ast)$ while $x$ is integrated out with respect to the fixed source law $\mu$. A jointly lower semi-continuous extension on $\X\times\Y^\ast$ would be needed for a direct compactness proof over joint couplings, but it is not required for the output-law variational problem considered here.
% \end{remark}

\begin{corollary}[Exact $0$-$1$ loss]
\label{cor:zeroone-itw}
Let $\X=\Y$ be a noncompact locally compact Polish space and define
\[
\rho_0(x,y)=\mathbf{1}_{\{x\neq y\}}.
\]
Set
\[
\rho_0^\ast(x,y)=
\begin{cases}
\mathbf{1}_{\{x\neq y\}}, & y\in\Y,\\[0.3em]
1, & y=\infty.
\end{cases}
\]
Then $\rho_0^\ast$ is bounded, Borel measurable, and for every fixed $x\in\X$ the map
\[
y\mapsto \rho_0^\ast(x,y)
\]
is lower semi-continuous on $\Y^\ast$. Moreover, \eqref{eq:CD-itw} holds for every $y_0\in\Y$. Consequently, the existence conclusion of Theorem~\ref{thm:bounded-itw} holds. 
\end{corollary}

\begin{proof}[Proof sketch]
See Appendix~\ref{app:zeroone-itw} for details. For fixed $x\in\X$, the singleton $\{x\}$ is closed because Polish spaces are Hausdorff. Hence
\(
\Y\setminus\{x\}
\)
is open, and $\mathbf{1}_{\{x\neq y\}}$ is lower semi-continuous on $\Y$. It remains to check lower semi-continuity at the compactification point. If $y_n\to\infty$ in $\Y^\ast$, then $y_n$ eventually leaves every compact subset of $\Y$. Since $\{x\}$ is compact, $y_n\neq x$ eventually. Hence
\(
\rho_0^\ast(x,y_n)\to 1=\rho_0^\ast(x,\infty).
\)
The domination \eqref{eq:CD-itw} is immediate because $\rho_0(x,y_0)\le1$.
\end{proof}
\begin{remark}
    A metric threshold loss $\mathbf{1}_{\{d(x,y)>\tau\}}$ can be handled in the same way whenever it admits a bounded compactified extension whose sections in $y$ are lower semi-continuous. This extension is automatic, for instance, when $d$ has the Heine--Borel property. Indeed, for fixed $x$, if $y_n\to\infty$ in $\Y^\ast$, then $y_n$ eventually leaves the compact ball $\{y:d(x,y)\le\tau\}$, and hence $d(x,y_n)>\tau$ eventually.
\end{remark}

\subsection{Coercive lower semi-continuous distortions}

For the unbounded regime we use concentration-compactness as an intrinsic compactness mechanism on the original reproduction alphabet. 
The topological assumption ``locally compact Polish'' is sufficient for this purpose, since such spaces admit compatible metrics with the Heine-Borel property. 
Recall that a metric has the Heine-Borel property if every closed and bounded set is compact. Hence we assume throughout this subsection that $\Y$ is a noncompact locally compact Polish space equipped with a compatible metric $d$ satisfying the Heine-Borel property.

\begin{lemma}[Compatible Heine--Borel remetrization]
\label{lem:HB-remetrization}
Every noncompact locally compact Polish space admits a metric $d_{\mathrm{HB}}$ that generates the original topology and satisfies the Heine--Borel property.
\end{lemma}
This is a standard remetrization fact; see, for example, \cite{williamson1987constructing}. 
% \begin{proof}[Proof sketch]
% This is a standard remetrization fact; see, for example, \cite{williamson1987constructing}. 
% For completeness, we recall a direct construction. 
% Let $\Y^\ast=\Y\cup\{\infty\}$ be the one-point compactification, and let $\delta$ be any compatible metric on the compact metrizable space $\Y^\ast$. Define
% \[
% d_{\mathrm{HB}}(y,z)
% =
% \delta(y,z)
% +
% \left|
% \delta(y,\infty)^{-1}-\delta(z,\infty)^{-1}
% \right|,
% \qquad y,z\in\Y.
% \]
% The second term forces every $d_{\mathrm{HB}}$-bounded set to stay a positive $\delta$-distance away from the compactification point $\infty$. Hence each closed $d_{\mathrm{HB}}$-bounded set is contained in a compact subset of $\Y^\ast$ disjoint from $\infty$, and is therefore compact in $\Y$. Since convergence in $d_{\mathrm{HB}}$ is equivalent to convergence in $\delta$ on $\Y$, the original topology is unchanged.
% \end{proof}

% Hence concentration-compactness may be applied after replacing the original metric by the compatible Heine--Borel metric $d_{\mathrm{HB}}$.

\begin{definition}[Inf-compactness in the reproduction variable]
\label{def:infcompact-y}
We say that $\rho:\X\times\Y\to[0,\infty]$ is \emph{inf-compact in the reproduction variable} if, for $\mu$-a.e. $x\in\X$ and every $L<\infty$, the sublevel set
\[
K_{x,L}:=\{y\in\Y:\rho(x,y)\le L\}
\]
is compact in $\Y$.
\end{definition}

\begin{remark}
The terminology ``inf-compact'' \cite{feinberg2012average} is standard in analysis and optimization: it means that all lower sublevel sets are compact. In a Hausdorff space, compact sublevel sets are closed; hence inf-compactness implies lower semi-continuity in the corresponding variable. In a metric space with Heine-Borel property, inf-compactness is equivalent to lower semi-continuity plus the coercivity ($d(y,y_0)\to\infty
\Longrightarrow
\rho(x,y)\to+\infty$ for $\mu$-a.e. $x\in\X$). Hence we use the term ``coercive'' in the main text.
\end{remark}

We next recall the concentration-compactness alternative in the form needed below, which is a straightforward extension of the classical result of Lions \cite{lions1984concentration1} to general Polish spaces. 
The point of Lemma~\ref{lem:HB-remetrization} is that, on a locally compact Polish alphabet, we may choose a compatible metric whose closed bounded sets are compact. 
Thus the metric concentration alternatives below can be interpreted as alternatives for loss of compactness in the original topology of $\Y$.

\begin{theorem}[Generalized concentration-compactness lemma I]
\label{thm:gcc-itw}
Let $\{\nu_n\}$ be a sequence of probability measures on an unbounded Polish space $(\Y,d)$. Then there exists a subsequence, still denoted by $\{\nu_n\}$, such that one of the following alternatives holds:
\begin{enumerate}
    \item \emph{Compactness.} For every $\varepsilon>0$, there exist $R<\infty$ and centers $y_n\in\Y$ such that
    \[
    \nu_n(B_d(y_n,R))\ge 1-\varepsilon
    \qquad\text{for all }n.
    \]

    \item \emph{Vanishing.} For every $R>0$,
    \[
    \sup_{y\in\Y}\nu_n(B_d(y,R))\to0.
    \]

    \item \emph{Dichotomy.} There exists $\alpha\in(0,1)$ such that for every $\varepsilon>0$, there exist $R>0$ and centers $y_n\in\Y$ with the following property: for every $R'>R$, there exist nonnegative measures $\nu_{n}^1,\nu_{n}^2$ satisfying
    \[
    0\le \nu_{n}^1+\nu_{n}^2\le \nu_n,
    \]
    \[
    \supp (\nu_{n}^1)\subset B_d(y_n,R),
    \qquad
    \supp (\nu_{n}^2)\subset B_d(y_n,R')^c,
    \]
    \[
    \limsup_{n\to\infty}
    \left(
    \left|\alpha-\nu_{n}^1(\Y)\right|
    +
    \left|(1-\alpha)-\nu_{n}^2(\Y)\right|
    \right)
    \le \varepsilon,
    \]
    and
    \[
    \operatorname{dist}
    \bigl(
    \supp (\nu_{n}^1),
    \supp (\nu_{n}^2)
    \bigr)
    \to\infty.
    \]
\end{enumerate}
Here
\(
\operatorname{dist}(A,B):=\inf\{d(a,b):a\in A,\ b\in B\}.
\)
\end{theorem}
\begin{proof}
    See Appendix~\ref{app:gcc-itw} for details.
\end{proof}
% When the dichotomy alternative is used below, we fix a small $\varepsilon>0$ and an admissible separation radius $R'>R$. We then write
% \[
% \nu_n^1:=\nu_{n,R'}^1,\qquad
% \nu_n^2:=\nu_{n,R'}^2,\qquad
% \nu_n^3:=\nu_n-\nu_n^1-\nu_n^2.
% \]
% Thus
% \[
% \nu_n^3\ge0,
% \]
% and the mass estimates imply
% \[
% \limsup_{n\to\infty}\nu_n^3(\Y)\le \varepsilon.
% \]
% The essential geometric information is not merely that the mass splits, but that the two principal supports become asymptotically separated:
% \[
% \operatorname{dist}
% \bigl(
% \supp\nu_n^1,\supp\nu_n^2
% \bigr)\to\infty.
% \]
% This is the feature that will be used to show that an escaping component contributes negligibly to the kernel integral.
% The concentration-compactness argument analyzes a minimizing sequence inside $\pc(\Y)$ and rules out the two possible losses of compactness, vanishing and dichotomy. The dichotomy exclusion is the genuinely nontrivial part: the support separation in the dichotomy alternative forces one component to escape every compact set, coercivity makes its kernel contribution negligible, and a mass-transfer perturbation then strictly lowers the objective. Thus concentration-compactness does more than prove existence; it yields stability of minimizing output laws and explains why compactness cannot fail.

The concentration-compactness argument is used to analyze how a minimizing sequence can fail to be tight in the original reproduction alphabet. Vanishing is ruled out because it would make the exponential integral inside the logarithm collapse for almost every source symbol, forcing the objective value to diverge rather than approach the optimum. Dichotomy is ruled out by using the separation of supports in an essential way. Once two principal components are forced far apart, at least one of them must escape every compact subset of the reproduction space. Coercivity then makes the escaping component asymptotically negligible. A small mass-transfer argument shows that such a split cannot be minimizing. Hence neither vanishing nor dichotomy can occur, and minimizing sequences must be tight in the original alphabet.

\begin{theorem}
\label{thm:coercive-itw}
Assume that $\X$ is Polish, $\mu\in\pc(\X)$, and $\Y$ is a noncompact locally compact Polish space equipped with a compatible Heine--Borel metric. Suppose that $\rho$ is lower semi-continuous and coercive in the reproduction variable in the sense of Definition~\ref{def:infcompact-y}. Fix $\beta>0$ and define
\[
m_\beta:=\inf_{\nu\in\pc(\Y)}J_\beta(\nu).
\]
Assume $m_\beta<\infty$. Then every minimizing sequence
\[
J_\beta(\nu_n)\to m_\beta
\]
is tight in $\pc(\Y)$. Consequently, $J_\beta$ attains its infimum on $\pc(\Y)$, and every weak cluster point of a minimizing sequence is a minimizer.
\end{theorem}
%~~~~~~~~~~~~~~~~~~~~~~~~~~~~~~~~~
\begin{proof}
    The details are given in the Appendix \ref{app:coercive-itw}.
\end{proof}
\begin{remark}
The coercive case can also be viewed through the one-point compactification by setting $\rho^\ast(x,\infty)=+\infty$. Such an argument can give a short proof of bare existence for the output-law functional. The concentration-compactness argument gives a stronger statement: it proves compactness of minimizing sequences in the original alphabet. More precisely, it shows that mass cannot vanish or split into widely separated components, and therefore any minimizing sequence is tight. This is a stability statement for approximation procedures.
\end{remark}
This theorem covers, for instance, the classical squared-error distortion on $\mathbb R^n$, the Itakura--Saito distortion on the positive orthant, and the lower semi-continuous dead-zone and penalty-jump distortions introduced in Section~\ref{sec: RD_recap}.
It is useful, however, to emphasize that the assumptions in Theorem~\ref{thm:coercive-itw} are close to sharp. The coercivity, local-compactness and lower semi-continuity assumptions cannot be omitted directly. The details of the following examples are given in the Appendix~\ref{app:counterexamples}, where we provide a uniform framework for constructing such counterexamples to existence of minimizers under various relaxations of the assumptions in Theorem~\ref{thm:coercive-itw}.
%~~~~~~~~~~~~~~~~~~~~

% The assumptions in Theorem~\ref{thm:coercive-itw} are close to sharp. Continuity alone does not imply attainability: for $\X=\{\ast\}$, $\Y=\R$, and $\rho(\ast,y)=(1+y^2)^{-1}$, one has $\inf_{\nu}J_\beta(\nu)=0$ but no minimizer. Moreover, local compactness and lower semi-continuity cannot be omitted. In the full version, we give a continuous metrically coercive example on $\ell^2$ for which minimizing output laws remain in a bounded noncompact region and no minimizer exists. We also give a measurable growth-coercive example on $\R$ where upward jumps at finite points destroy attainability. These examples show that the locally compact Polish assumption and lower semi-continuity are not merely technical.
%~~~~~~~~~~~~~~~~~~~~

\begin{example}[Coercivity is essential]
\label{prop:counter-itw}
Let $\X=\{\ast\}$, $\Y=\R$, and
\[
\rho(\ast,y)=\frac{1}{1+y^2}.
\]
Then for every $\beta>0$,
\[
\inf_{\nu\in\pc(\R)}J_\beta(\nu)=0,
\]
but the infimum is not attained by any $\nu\in\pc(\R)$.
\end{example}

\begin{example}[Local compactness is essential]
\label{ex:nonlocally-compact-counter}
Let $\X=\{0,1\}$ with $\mu(0)=\mu(1)=1/2$, and let
\(
\Y=\ell^2
\)
with its usual Hilbert norm. This is a Polish space but is not locally compact. Let $\{e_n\}_{n\ge1}$ denote the standard orthonormal basis of $\ell^2$. Define
\[
r_+(y)\!:=\!\inf_{n\ge1}\!\left(\!\frac1n+\|y-e_n\|\!\right)\!,
r_-(y)\!:=\!\inf_{n\ge1}\!\left(\!\frac1n+\|y+e_n\|\!\right).
\]
The functions $r_+$ and $r_-$ are $1$-Lipschitz and are strictly positive at every fixed $y\in\ell^2$. Set
\[
q_1(y):=\frac{r_+(y)}{r_+(y)+r_-(y)}\in(0,1),\;
q_2(y):=(\|y\|^2-1)^2,
\]
and define
\(
\rho(0,y)=q_1(y)+q_2(y),\;
\rho(1,y)=1-q_1(y)+q_2(y).
\)
Then $\rho$ is continuous and coercive.
However, the infimum of $J_\beta$ over $\pc(\Y)$ is not attained.
\end{example}

\begin{example}[Lower semi-continuity is essential]
\label{ex:measurable-nonlsc-counter}
Let again $\X=\{0,1\}$ with $\mu(0)=\mu(1)=1/2$, and let
\(
\Y=\R
\)
with the usual metric. This is a locally compact Polish space with the Heine-Borel property. Define
\(
q_2(y):=(y^2-1)^2,
\)
and define the Borel function $q_1:\R\to(0,1)$ by
\[
q_1(y)=
\begin{cases}
|y-1|, & 0<|y-1|<\frac14,\\[0.4em]
1-|y+1|, & 0<|y+1|<\frac14,\\[0.4em]
\frac12, & \text{otherwise}.
\end{cases}
\]
Set
\(
\rho(0,y)=q_1(y)+q_2(y),\;
\rho(1,y)=1-q_1(y)+q_2(y).
\)
Then $\rho$ is coercive but not lower semi-continuous. Moreover, $J_\beta$ has a finite infimum that is not attained on $\pc(\Y)$.
\end{example}

% This is consistent with Csisz\'ar's observation that compactness of the reproduction alphabet is used to ensure tightness of the output marginals, and that Polishness alone is sufficient only if tightness is guaranteed by some other assumption~\cite{csiszar1974extremum}. In Euclidean spaces, norm-power distortions provide such a mechanism through the Heine--Borel property. Examples~\ref{ex:nonlocally-compact-counter} and~\ref{ex:measurable-nonlsc-counter} show that this mechanism does not extend automatically to arbitrary Polish alphabets or to merely measurable distortions.

% \begin{remark}
% The word ``coercive'' in Examples~\ref{ex:nonlocally-compact-counter} and~\ref{ex:measurable-nonlsc-counter} is used only in the weaker metric-growth sense. It is not the compact-sublevel coercivity used in Theorem~\ref{thm:coercive-itw}. In a Hausdorff space, compact-sublevel coercivity already implies lower semi-continuity, since compact sets are closed. Hence there is no counterexample that is compact-sublevel coercive but fails lower semi-continuity. The examples show instead that the weaker intuition ``growth at infinity should be enough'' is false without the compact-sublevel structure.
% \end{remark}

%~~~~~~~~~~~~~~~~~~~~~~~~~~~~~~~~~~

\section{Discussion and Future Work}
\label{sec: Discussion}

In this paper, we identify two explicit mechanisms under which optimal reconstructions are attained on the original (compared to the compactified) noncompact reproduction alphabet: a compactification-dominance mechanism for bounded distortions, and a concentration-compactness mechanism for coercive distortions. To the best of our knowledge, this yields the first general original-space existence theory for optimal reconstructions on noncompact Polish alphabets.

The analytical framework presented here opens up several avenues for future research. A primary direction is to address models with additional structural constraints, which are common in applications. For instance, in non-anticipative (or causal) rate-distortion theory \cite{charalambous2014nonanticipative, Charalambous2016Directed, Charalambous2018Nonanticipative, Charalambous2022Complete}, existence results on non-compact spaces with lower semi-continuous distortion remain scarce, largely due to the analytical challenges posed by causal constraints. Adapting a variational approach to rigorously establish the existence of optimal causal coders in such non-compact settings remains a significant open problem for future investigation.
\clearpage
\bibliographystyle{IEEEtran}
\bibliography{rd_noncompact_refs}

% Generated by IEEEtran.bst, version: 1.14 (2015/08/26)
\begin{thebibliography}{10}
\providecommand{\url}[1]{#1}
\csname url@samestyle\endcsname
\providecommand{\newblock}{\relax}
\providecommand{\bibinfo}[2]{#2}
\providecommand{\BIBentrySTDinterwordspacing}{\spaceskip=0pt\relax}
\providecommand{\BIBentryALTinterwordstretchfactor}{4}
\providecommand{\BIBentryALTinterwordspacing}{\spaceskip=\fontdimen2\font plus
\BIBentryALTinterwordstretchfactor\fontdimen3\font minus \fontdimen4\font\relax}
\providecommand{\BIBforeignlanguage}[2]{{%
\expandafter\ifx\csname l@#1\endcsname\relax
\typeout{** WARNING: IEEEtran.bst: No hyphenation pattern has been}%
\typeout{** loaded for the language `#1'. Using the pattern for}%
\typeout{** the default language instead.}%
\else
\language=\csname l@#1\endcsname
\fi
#2}}
\providecommand{\BIBdecl}{\relax}
\BIBdecl

\bibitem{berger2003rate}
T.~Berger, \emph{Rate-Distortion Theory: A Mathematical Basis for Data Compression}.\hskip 1em plus 0.5em minus 0.4em\relax Englewood Cliffs: Prentice-Hall, 1971.

\bibitem{blahut1972computation}
R.~Blahut, ``Computation of channel capacity and rate-distortion functions,'' \emph{IEEE Trans. Inf. Theory}, vol.~18, no.~4, pp. 460--473, 1972.

\bibitem{blahut1987principles}
R.~E. Blahut, \emph{Principles and Practice of Information Theory}.\hskip 1em plus 0.5em minus 0.4em\relax Addison-Wesley Longman Publishing Co., Inc., 1987.

\bibitem{cover2006elements}
T.~M. Cover and J.~A. Thomas, \emph{Elements of Information Theory}, 2nd~ed.\hskip 1em plus 0.5em minus 0.4em\relax New York, NY, USA: Wiley, 2006.

\bibitem{csiszar1974extremum}
I.~Csisz{\'a}r, ``On an extremum problem of information theory,'' \emph{Studia Scientiarum Mathematicarum Hungarica}, vol.~9, no.~1, pp. 57--71, 1974.

\bibitem{riegler2018rate}
E.~Riegler, H.~B{\"o}lcskei, and G.~Koliander, ``Rate-distortion theory for general sets and measures,'' in \emph{Proc. lEEE Int. Symp. Inf. Theory (ISIT)}, Vail, CO, USA, 2018, pp. 101--105.

\bibitem{riegler2023lossy}
E.~Riegler, G.~Koliander, and H.~B{\"o}lcskei, ``Lossy compression of general random variables,'' \emph{Information and Inference: A Journal of the IMA}, vol.~12, no.~3, pp. 1759--1829, 2023.

\bibitem{kawabata1994rate}
T.~Kawabata and A.~Dembo, ``The rate-distortion dimension of sets and measures,'' \emph{IEEE Trans. Inf. Theory}, vol.~40, no.~5, pp. 1564--1572, 1994.

\bibitem{kostina2019successive}
V.~Kostina and E.~Tuncel, ``Successive refinement of abstract sources,'' \emph{IEEE Trans. Inf. Theory}, vol.~65, no.~10, pp. 6385--6398, 2019.

\bibitem{equitz1991successive}
W.~H. Equitz and T.~M. Cover, ``Successive refinement of information,'' \emph{IEEE Trans. Inf. Theory}, vol.~37, no.~2, pp. 269--275, 1991.

\bibitem{gozlan2017kantorovich}
N.~Gozlan, C.~Roberto, P.-M. Samson, and P.~Tetali, ``Kantorovich duality for general transport costs and applications,'' \emph{J. Funct. Anal.}, vol. 273, no.~11, pp. 3327--3405, 2017.

\bibitem{zou2025revisitratedistortionproblemsoptimal}
J.~Zou, L.~Fan, J.~Gao, and J.~Wang, ``A revisit to rate-distortion theory via optimal weak transport,'' \emph{arXiv:2501.09362}, 2025.

\bibitem{rezaei2006rate}
F.~Rezaei, N.~Ahmed, and C.~D. Charalambous, ``Rate distortion theory for general sources with potential application to image compression,'' \emph{International Journal of Applied Mathematical Sciences}, vol.~3, no.~2, pp. 141--165, 2006.

\bibitem{mao2021rateconstrained}
M.~Z. Mao, R.~M. Gray, and T.~Linder, ``Rate-constrained simulation and source coding {IID} sources,'' \emph{IEEE Trans. Inf. Theory}, vol.~57, no.~7, pp. 4516--4529, 2011.

\bibitem{lions1984concentration1}
P.-L. Lions, ``The concentration-compactness principle in the calculus of variations. {The} locally compact case, part 1,'' \emph{Annales de l'Institut Henri Poincar{\'e} C, Analyse non lin{\'e}aire}, vol.~1, no.~2, pp. 109--145, 1984.

\bibitem{lions1984concentration2}
------, ``The concentration-compactness principle in the calculus of variations. {The} locally compact case, part 2,'' \emph{Annales de l'Institut Henri Poincar{\'e} C, Analyse non lin{\'e}aire}, vol.~1, no.~4, pp. 223--283, 1984.

\bibitem{lions1985concentration1}
------, ``The concentration-compactness principle in the calculus of variations. {The} limit case, part 1,'' \emph{Revista Matem{\'a}tica Iberoamericana}, vol.~1, no.~1, pp. 145--201, 1985.

\bibitem{lions1985concentration2}
------, ``The concentration-compactness principle in the calculus of variations. {The} limit case, part 2,'' \emph{Revista Matem{\'a}tica Iberoamericana}, vol.~1, no.~2, pp. 45--121, 1985.

\bibitem{Watanabe2020discrete}
K.~Watanabe, ``Discrete optimal reconstruction distributions for {I}takura-{S}aito distortion measure,'' in \emph{Proc. lEEE Int. Symp. Inf. Theory (ISIT)}, Los Angeles, CA, USA, 2020, pp. 2399--2404.

\bibitem{gray2011entropy}
R.~M. Gray, \emph{Entropy and Information Theory}, 2nd~ed.\hskip 1em plus 0.5em minus 0.4em\relax New York: Springer, 2011.

\bibitem{donoho1995noising}
D.~L. Donoho, ``De-noising by soft-thresholding,'' \emph{IEEE Trans. Inf. Theory}, vol.~41, no.~3, pp. 613--627, 1995.

\bibitem{conn1998discontinuous}
A.~R. Conn and M.~Mongeau, ``Discontinuous piecewise linear optimization,'' \emph{Mathematical programming}, vol.~80, no.~3, pp. 315--380, 1998.

\bibitem{mongeau2008discontinuous}
M.~Mongeau, ``Discontinuous optimization,'' in \emph{Encyclopedia of Optimization}.\hskip 1em plus 0.5em minus 0.4em\relax Springer, 2008, pp. 739--744.

\bibitem{imo1984discontinuous}
I.~Imo and D.~Leech, ``Discontinuous optimization in batch production using {S}{U}{M}{T},'' \emph{The International Journal of Production Research}, vol.~22, no.~2, pp. 313--321, 1984.

\bibitem{fourer1985simplex}
R.~Fourer, ``A simplex algorithm for piecewise-linear programming i: Derivation and proof,'' \emph{Mathematical Programming}, vol.~33, no.~2, pp. 204--233, 1985.

\bibitem{williamson1987constructing}
R.~Williamson and L.~Janos, ``Constructing metrics with the {H}eine-{B}orel property,'' \emph{Proc. Amer. Math. Soc.}, vol. 100, no.~3, pp. 567--573, 1987.

\bibitem{feinberg2012average}
E.~A. Feinberg, P.~O. Kasyanov, and N.~V. Zadoianchuk, ``Average-cost {M}arkov decision processes with weakly continuous transition probabilities,'' \emph{Mathematics of Operations Research}, vol.~37, no.~4, pp. 591--607, 2012.

\bibitem{charalambous2014nonanticipative}
C.~D. Charalambous, P.~A. Stavrou, and N.~U. Ahmed, ``Nonanticipative rate distortion function and relations to filtering theory,'' \emph{IEEE Transactions on Automatic Control}, vol.~59, no.~4, pp. 937--952, 2014.

\bibitem{Charalambous2016Directed}
C.~D. Charalambous and P.~A. Stavrou, ``Directed information on abstract spaces: Properties and variational equalities,'' \emph{IEEE Trans. Inf. Theory}, vol.~62, no.~11, pp. 6019--6052, 2016.

\bibitem{Charalambous2018Nonanticipative}
P.~A. Stavrou, T.~Charalambous, C.~D. Charalambous, and S.~Loyka, ``Optimal estimation via nonanticipative rate distortion function and applications to time-varying {G}auss--{M}arkov processes,'' \emph{SIAM Journal on Control and Optimization}, vol.~56, no.~5, pp. 3731--3765, 2018.

\bibitem{Charalambous2022Complete}
C.~D. Charalambous, T.~Charalambous, C.~Kourtellaris, and J.~H. van Schuppen, ``Complete characterization of {G}orbunov and {P}insker nonanticipatory epsilon entropy of multivariate {G}aussian sources: Structural properties,'' \emph{IEEE Trans. Inf. Theory}, vol.~68, no.~3, pp. 1440--1464, 2022.

\bibitem{struwe2000variational}
M.~Struwe, \emph{Variational Methods: Applications to {N}onlinear {P}artial {D}ifferential {E}quations and {H}amiltonian {S}ystems}.\hskip 1em plus 0.5em minus 0.4em\relax Berlin, Heidelberg: Springer-Verlag, 2000, vol. 306.

\end{thebibliography}
\newpage
\onecolumn
\appendices

\section{Proof of Theorem~\ref{thm:bounded-itw}}
\label{app:bounded-itw}

We prove the bounded compactification theorem in the output-law formulation. The case $\beta=0$ is immediate, since $J_0(\nu)=0$ for every $\nu\in\pc(\Y)$. Fix $\beta>0$.

Define the compactified functional on $\pc(\Y^\ast)$ by
\begin{equation}
J_\beta^\ast(\nu):=-\int_\X\log\left(\int_{\Y^\ast}e^{-\beta\rho^\ast(x,y)}\,\nu(\D y)\right)\mu(\D x).
\end{equation}
Since $0\le\rho^\ast\le M$, the inner integral belongs to $[e^{-\beta M},1]$ for every $x$ and every $\nu$. In particular, the logarithm is finite and the integrand is bounded between $0$ and $\beta M$.

We first prove that $J_\beta^\ast$ is weakly lower semi-continuous on $\pc(\Y^\ast)$. Let $\nu_k\rightharpoonup\nu$ weakly in $\pc(\Y^\ast)$. For $\mu$-a.e. $x$, the section $y\mapsto\rho^\ast(x,y)$ is lower semi-continuous on the compact metric space $\Y^\ast$. Hence
\begin{equation}
y\mapsto e^{-\beta\rho^\ast(x,y)}
\end{equation}
is bounded upper semi-continuous. By the Portmanteau theorem,
\begin{equation}
\int_{\Y^\ast}e^{-\beta\rho^\ast(x,y)}\,\nu(\D y)
\ge
\limsup_{k\to\infty}
\int_{\Y^\ast}e^{-\beta\rho^\ast(x,y)}\,\nu_k(\D y).
\end{equation}
Since $z\mapsto -\log z$ is continuous and decreasing on $[e^{-\beta M},1]$,
\begin{equation}
-\log\int_{\Y^\ast}e^{-\beta\rho^\ast(x,y)}\,\nu(\D y)
\le
\liminf_{k\to\infty}
\left[-\log\int_{\Y^\ast}e^{-\beta\rho^\ast(x,y)}\,\nu_k(\D y)\right]
\end{equation}
for $\mu$-a.e. $x$. Fatou's lemma gives
\begin{equation}
J_\beta^\ast(\nu)\le\liminf_{k\to\infty}J_\beta^\ast(\nu_k).
\end{equation}
Thus $J_\beta^\ast$ is weakly lower semi-continuous.

Since $\Y^\ast$ is compact metric, $\mathcal P(\Y^\ast)$ is compact under the topology of weak convergence. Therefore $J_\beta^\ast$ attains its infimum at some $\nu^\ast\in\pc(\Y^\ast)$.

Let $y_0\in\Y$ be given by the compactification-dominance condition and define
\begin{equation}
T:\Y^\ast\to\Y,
\qquad
T(y)=y\quad(y\in\Y),
\qquad
T(\infty)=y_0.
\end{equation}
This map is Borel measurable. Set
\begin{equation}
\widehat\nu:=T_\#\nu^\ast\in\pc(\Y).
\end{equation}
For $\mu$-a.e. $x$ and every $y\in\Y^\ast$, the extension property and compactification-dominance give
\begin{equation}
\rho(x,T(y))\le \rho^\ast(x,y).
\end{equation}
Indeed, if $y\in\Y$, both sides are equal; if $y=\infty$, this is exactly the compactification-dominance condition. Hence
\begin{equation}
e^{-\beta\rho(x,T(y))}\ge e^{-\beta\rho^\ast(x,y)}.
\end{equation}
Consequently, for $\mu$-a.e. $x$,
\begin{equation}
\int_\Y e^{-\beta\rho(x,z)}\,\widehat\nu(\D z)
=
\int_{\Y^\ast} e^{-\beta\rho(x,T(y))}\,\nu^\ast(\D y)
\ge
\int_{\Y^\ast} e^{-\beta\rho^\ast(x,y)}\,\nu^\ast(\D y).
\end{equation}
Taking $-\log$ and integrating yields
\begin{equation}
J_\beta(\widehat\nu)\le J_\beta^\ast(\nu^\ast).
\end{equation}
On the other hand, every $\nu\in\pc(\Y)$ can be regarded as a probability measure on $\Y^\ast$ that assigns zero mass to $\infty$, and for such measures the two functionals agree:
\begin{equation}
J_\beta^\ast(\nu)=J_\beta(\nu).
\end{equation}
Thus
\begin{equation}
\inf_{\pc(\Y^\ast)}J_\beta^\ast\le \inf_{\pc(\Y)}J_\beta.
\end{equation}
Combining this with $\inf_{\pc(\Y)}J_\beta\le J_\beta(\widehat\nu)\le J_\beta^\ast(\nu^\ast)=\inf_{\pc(\Y^\ast)}J_\beta^\ast$ gives
\begin{equation}
J_\beta(\widehat\nu)=\inf_{\nu\in\pc(\Y)}J_\beta(\nu).
\end{equation}
Therefore the infimum of $J_\beta$ is attained on the original alphabet $\Y$.

\section{Proof of Corollary~\ref{cor:zeroone-itw}}
\label{app:zeroone-itw}

Fix $x\in\X=\Y$. Since $\Y$ is Polish, it is Hausdorff; hence the singleton $\{x\}$ is closed. Therefore the set $\Y\setminus\{x\}$ is open, and the section
\begin{equation}
y\mapsto \mathbf 1_{\{x\ne y\}}
\end{equation}
is lower semi-continuous on $\Y$.

It remains to verify lower semi-continuity at the compactification point. Let $y_n\to\infty$ in $\Y^\ast$. By definition of the Alexandroff compactification, the sequence $\{y_n\}$ eventually leaves every compact subset of $\Y$. Since $\{x\}$ is compact, we have $y_n\notin\{x\}$ for all sufficiently large $n$. Thus
\begin{equation}
\rho_0^\ast(x,y_n)=1
\end{equation}
for all sufficiently large $n$, and consequently
\begin{equation}
\liminf_{n\to\infty}\rho_0^\ast(x,y_n)=1=\rho_0^\ast(x,\infty).
\end{equation}
This proves lower semi-continuity of $y\mapsto\rho_0^\ast(x,y)$ on $\Y^\ast$ for every fixed $x$.

The extension is Borel measurable. Indeed,
\begin{equation}
(\rho_0^\ast)^{-1}(\{1\})
=
\{(x,y)\in\Y\times\Y:x\neq y\}
\cup
(\Y\times\{\infty\}).
\end{equation}
The diagonal is closed in $\Y\times\Y$ because $\Y$ is Hausdorff, hence
\begin{equation}
\{(x,y)\in\Y\times\Y:x\neq y\}
\end{equation}
is Borel. Moreover, $\{\infty\}$ is closed in $\Y^\ast$, so $\Y\times\{\infty\}$ is Borel. Therefore $\rho_0^\ast$ is Borel measurable. Finally, for any $y_0\in\Y$,
\begin{equation}
\rho_0(x,y_0)\le 1=\rho_0^\ast(x,\infty),
\end{equation}
so the compactification-dominance condition holds. The conclusion follows from Theorem~\ref{thm:bounded-itw}.

\section{Proof of Theorem~\ref{thm:gcc-itw}}
\label{app:gcc-itw}

The proof follows the classical strategy of \cite{lions1984concentration1, lions1984concentration2,struwe2000variational}, adapted to the abstract Polish-space setting.
We begin by introducing the notion of the concentration function associated with a measure $\mu$:
\begin{equation}
    Q_{\nu}(R) \coloneqq \sup_{y\in \Y} \nu (B_d(y,R)).
\end{equation}

Let $Q_n$ denote the concentration function associated with the measure $\nu_n$.
Observe that $\{Q_n\}$ is a sequence of non-decreasing, nonnegative, and bounded functions on $\mathbb{R}_+$.
Moreover, since each $\nu_n$ is a probability measure, it holds that
\begin{equation}
\lim_{R\to\infty} Q_n(R)=1.
\end{equation}
Complementing this concentration-compactness lemma, our analysis will also leverage a classical result from real analysis, which we state below for completeness.
\begin{lemma}{}[Helly's Selection Theorem]
Let $\{f_n\}$ be a sequence of nondecreasing real-valued functions on an interval $I\subset\R$.
Assume that the sequence is uniformly bounded, i.e., there exist constants $a,b\in\R$ such that $a\le f_n \le b, \forall n$.
Then there exists a subsequence that converges pointwise on $I$.
\end{lemma}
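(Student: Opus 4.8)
The plan is to establish Helly's selection theorem by a Cantor diagonal extraction over a countable dense subset of $I$, followed by an upgrade of convergence from that dense set to all of $I$ exploiting monotonicity. If $I$ reduces to a single point the claim is trivial, so I would assume $I$ has nonempty interior, so that $\mathbb{Q}\cap I$ is countably infinite and dense; fix an enumeration $\{q_k\}_{k\ge 1}$ of it. First I would diagonalize: since $a\le f_n(q_1)\le b$ for all $n$, Bolzano--Weierstrass yields a subsequence $\{f_{1,n}\}_n$ along which $\{f_{1,n}(q_1)\}$ converges, and inductively from $\{f_{k,n}\}_n$ I extract a further subsequence $\{f_{k+1,n}\}_n$ converging at $q_{k+1}$ (and still at $q_1,\dots,q_k$). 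The diagonal sequence $g_n\coloneqq f_{n,n}$ is eventually a subsequence of each $\{f_{k,n}\}_n$, so $g(q)\coloneqq\lim_{n\to\infty}g_n(q)$ exists for every $q\in\mathbb{Q}\cap I$; since every $g_n$ is nondecreasing, $g$ is nondecreasing on $\mathbb{Q}\cap I$ with values in $[a,b]$.

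Next I would construct the candidate limit $h:I\to[a,b]$ by $h(x)\coloneqq\sup\{g(q):q\in\mathbb{Q}\cap I,\ q\le x\}$, verify that $h$ is nondecreasing and satisfies $h(q)=g(q)$ for rational $q\in I$, and invoke the classical fact that a bounded monotone function on an interval has an at most countable discontinuity set $D$. After enlarging $D$ by the (at most two) endpoints of $I$ that belong to $I$ — still a countable set — a second diagonal extraction applied to $\{g_n\}$ along an enumeration of $D$ produces a further subsequence, still denoted $\{g_n\}$, that converges at every point of $D$ as well as (by passing to a subsequence) at every rational of $I$.

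It then remains to treat an interior point $x\in I\setminus(\mathbb{Q}\cup D)$, for which I would use a squeeze. Choosing rationals $q<x<q'$ in $I$, monotonicity of $g_n$ gives $g_n(q)\le g_n(x)\le g_n(q')$, so
\[
g(q)\ \le\ \liminf_{n\to\infty}g_n(x)\ \le\ \limsup_{n\to\infty}g_n(x)\ \le\ g(q').
\]
Letting $q\uparrow x$ and $q'\downarrow x$ through rationals and using that $x$ is a continuity point of $h$ together with $h(q)=g(q)$, $h(q')=g(q')$, the two outer quantities both tend to $h(x)$, forcing $g_n(x)\to h(x)$. Combined with convergence at rationals, at the endpoints, and at the points of $D$, this gives pointwise convergence of $\{g_n\}$ on all of $I$.

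The initial diagonalization is entirely routine; the step I expect to require the most care is the passage from convergence on the dense rational set to convergence everywhere. This is precisely where the two hypotheses are essential: monotonicity of each $f_n$ drives the squeeze inequality, and boundedness is used both to run Bolzano--Weierstrass and to ensure the limiting monotone $h$ is finite-valued, so that its discontinuity set is genuinely countable and can be absorbed by one additional extraction. A simple oscillating example such as $f_n=\bm{1}_{[c+(-1)^n/n,\,\infty)}$ shows that this extra extraction over $D$ cannot be dispensed with, since the first diagonal subsequence may fail to converge at the jump point $c$.
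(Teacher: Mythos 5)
Your argument is the standard and correct proof of Helly's selection theorem: a Cantor diagonal extraction over $\mathbb{Q}\cap I$, a monotone candidate limit $h$ whose discontinuity set is countable, a second extraction to fix convergence there (and at the endpoints), and a squeeze via monotonicity at the remaining continuity points. The paper itself states this lemma without proof, citing it only as a classical fact from real analysis, so there is nothing to compare against; your write-up fills that gap correctly, and your closing example rightly illustrates why the second extraction over the discontinuity set cannot be omitted.
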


By Helly's selection theorem, there exists a subsequence $\{Q_n\}$ (by a slight abuse of notation) that converges pointwise for almost every $R>0$ to a non-decreasing, nonnegative, bounded function $Q$.
Since any monotone function admits left and right limits and has at most countably many discontinuities, we normalize $Q$ to be left-continuous by defining
\begin{equation}
Q(R)=\lim_{\epsilon\to 0^+} Q(R-\epsilon).
\end{equation}

Fix an arbitrary $R>0$.
We consider a sequence $\{R_k\}$ such that $R_k \nearrow R$ and
\begin{equation}
\lim_{n\to\infty} Q_n(R_k)=Q(R_k).
\end{equation}
Since each $Q_n$ is non-decreasing, we have $Q_n(R_k)\leq Q_n(R)$ for all $k,n\in\mathbb{N}$.
Consequently,
\begin{equation}
\liminf_{n\to \infty} Q_n(R_k)
=
\lim_{n\to \infty} Q_n(R_k)
=
Q(R_k)
\leq
\liminf_{n\to\infty} Q_n(R).
\end{equation}
Letting $R_k\to R$ and using the left-continuity of $Q$, we obtain
\begin{equation}
\lim_{k\to\infty} Q(R_k)=Q(R)\leq \liminf_{n\to\infty} Q_n(R).
\end{equation}

Define
\begin{equation}
\lambda \coloneqq \lim_{R\to\infty} Q(R),
\end{equation}
which clearly satisfies $\lambda\in[0,1]$.
We now distinguish three cases.

\medskip
\noindent
\textbf{Case 1: $\lambda=0$ (Vanishing).}
In this case, by definition of $\lambda$, we have
\begin{equation}
\lim_{n\to\infty}
\left(
\sup_{y\in\Y} \nu_n(B_d(y,R))
\right)=0,
\qquad \forall R>0,
\end{equation}
which is precisely the vanishing alternative.

\medskip
\noindent
\textbf{Case 2: $\lambda=1$ (Compactness).}
Then there exists some $R_0>0$ such that $Q(R_0)>\tfrac12$.
For each $n\in\mathbb{N}$, choose $y_n\in\Y$ satisfying
\begin{equation}
Q_n(R_0)
\leq
\nu_n(B_d(y_n,R_0)) + \frac{1}{n},
\qquad
\lim_{n\to\infty} Q_n(R_0)=Q(R_0).
\end{equation}

Fix $0<\epsilon<\tfrac12$.
Choose $R>0$ such that $Q(R)>1-\epsilon>\tfrac12$.
For each $n$, let $z_n\in\Y$ satisfy
\begin{equation}
Q_n(R)
\leq
\nu_n(B_d(z_n,R)) + \frac{1}{n},
\qquad
\lim_{n\to\infty} Q_n(R)=Q(R),
\end{equation}
where the choice of $z_n$ may depend on $\epsilon$.
Then
\begin{IEEEeqnarray}{rCl}
\nu_n(B_d(y_n,R_0))
+
\nu_n(B_d(z_n,R))
&\ge&
Q_n(R_0)+Q_n(R)-\frac{2}{n} \nonumber\\
&\to&
Q(R_0)+Q(R) \nonumber\\
&>&
\frac12+\frac12=1=\int_{\Y}\,\mathrm{d}\nu_n,
\end{IEEEeqnarray}
for all sufficiently large $n$.
Hence the balls $B_d(y_n,R_0)$ and $B_d(z_n,R)$ must intersect.
Otherwise,
\begin{IEEEeqnarray}{rCl}
\nu_n(B_d(y_n,R_0))
+
\nu_n(B_d(z_n,R))
&=&
\nu_n(B_d(y_n,R_0)\cup B_d(z_n,R)) \nonumber\\
&\le&
\int_{\Y}\,\mathrm{d}\nu_n=1,
\end{IEEEeqnarray}
which leads to a contradiction.

Therefore,
\begin{IEEEeqnarray}{rCl}
&&
\dist(y_n,z_n)\le R_0+R,
\qquad
B_d(z_n, R)\subset B_{d}(y_n, 2R+R_0), \nonumber\\
&&
\nu_n(B_d(y_n, 2R+R_0))
\ge
\nu_n(B_d(z_n,R))
\ge
Q_n(R)-\frac{1}{n}
\ge
1-\epsilon,
\end{IEEEeqnarray}
for all sufficiently large $n$.
Thus there exists $N\in\mathbb{N}$ such that the above holds for all $n\ge N$.
Moreover, since finitely many indices are involved, there exists $\widetilde R>0$ such that
\begin{equation}
\nu_n(B_d(y_n, \widetilde R))\ge 1-\epsilon,
\qquad
\forall\,1\le n\le N.
\end{equation}
Setting $R'=\max\{\widetilde R,\,2R+R_0\}$ yields the compactness alternative.

\medskip
\noindent
\textbf{Case 3: $0<\lambda<1$ (Dichotomy).}
Fix $\epsilon>0$.
There exists $R(\epsilon)>0$ such that
\begin{equation}
Q(R)>\lambda-\frac{\epsilon}{3},
\qquad
\lim_{n\to\infty} Q_n(R)=Q(R).
\end{equation}
Hence there exists $M(\epsilon)\in\mathbb{N}$ such that
\begin{equation}
Q_n(R)>Q(R)-\frac{\epsilon}{3}>\lambda-\frac{2\epsilon}{3},
\qquad
\forall n\ge M(\epsilon).
\end{equation}
Choose a sequence $\{y_n\}$ depending on $\epsilon$ such that
\begin{equation}
Q_n(R)
\ge
\nu_n(B_d(y_n,R))
\ge
Q_n(R)-\frac{\epsilon}{3}
>
\lambda-\epsilon.
\end{equation}

Similarly, there exist $R'(\epsilon)>0$ and $M'(\epsilon)\ge M(\epsilon)$ such that
\begin{equation}
\lambda-\epsilon<Q_n(R')<\lambda+\epsilon,
\qquad
\forall n\ge M'(\epsilon).
\end{equation}
Since $\lim_{r\to\infty} Q(r)=\lambda$, there exists $R_0<\infty$ such that
\begin{equation}
Q(r)<\lambda+\frac{\epsilon}{2},
\qquad
\forall r\ge R_0.
\end{equation}
Choose a strictly increasing sequence $\{R_k\}_{k\ge1}$ with $R_k\to\infty$ and $R_1\ge R_0$, such that
\begin{equation}
Q(R_k)<\lambda+\frac{\epsilon}{2},
\qquad
\lim_{n\to\infty} Q_n(R_k)=Q(R_k),
\quad \forall k\in\mathbb{N}.
\end{equation}

By a diagonal construction, we define an increasing index sequence $\{M_k\}_{k\ge1}$ such that
\begin{equation}
Q_n(R_k)<Q(R_k)+\frac{\epsilon}{2}<\lambda+\epsilon,
\qquad
\forall n\ge M_k.
\end{equation}
Using $\{R_k\}$ and $\{M_k\}$, define a sequence $\{\widetilde R_n\}$ by
\begin{equation}
\widetilde R_n=
\begin{cases}
R_0, & M'\le n<M_1,\\
R_k, & M_k\le n<M_{k+1}.
\end{cases}
\end{equation}

In summary, for any $\epsilon>0$, there exist $R>0$, $n_0\in\mathbb{N}$, a sequence $\{R_n\}$ with $R_n\to\infty$, and points $\{y_n\}\subset\Y$ such that for all $n\ge n_0$,
\begin{IEEEeqnarray}{rCl}
&&
Q_n(R)\ge \nu_n(B_d(y_n,R))>\lambda-\epsilon,\\
&&
Q_n(R)\le Q_n(R_n)<\lambda+\epsilon.
\end{IEEEeqnarray}
For any $R'>R$, we may increase $n_0$ so that $R_n\ge R'$ for all $n\ge n_0$.
Define
\begin{equation}
\nu_n^1=\nu_n\cdot\bm{1}_{B_d(y_n,R)},
\qquad
\nu_n^2=\nu_n\cdot\bm{1}_{\Y\setminus B_d(y_n,R_n)}.
\end{equation}
Then $0\le \nu_n^1+\nu_n^2\le \nu_n$, with
\begin{equation}
\supp(\nu_n^1)\subset B_d(y_n,R),
\qquad
\supp(\nu_n^2)\subset \Y\setminus B_d(y_n,R_n)\subset \Y\setminus B_d(y_n,R').
\end{equation}
Finally, for all $n\ge n_0$,
\begin{IEEEeqnarray}{rCl}
\left|
\lambda-\nu_n^1(\Y)
\right|
+
\left|
(1-\lambda)-\nu_n^2(\Y)
\right| =
\left|
\lambda-\nu_n(B_d(y_n,R))
\right|
+
\left|
\lambda-\nu_n(B_d(y_n,R_n))
\right| < 
2\epsilon.
\end{IEEEeqnarray}
This establishes the dichotomy alternative and completes the proof.

\section{Proof of Theorem~\ref{thm:coercive-itw}}
\label{app:coercive-itw}

Let
\(
m_\beta:=\inf_{\nu\in\pc(\Y)}J_\beta(\nu)<\infty.
\)
Throughout the proof, fix a compatible Heine--Borel metric $d$ on $\Y$.
Let $\{\nu_n\}\subset\pc(\Y)$ be a minimizing sequence:
\begin{equation}
J_\beta(\nu_n)\to m_\beta.
\end{equation}
By passing to a subsequence if necessary, we may always suppose that
\begin{equation}
J_\beta(\nu_n)\le m_\beta+\frac1n.
\end{equation}
The concentration-compactness alternative is subsequential. Thus, in Steps~1 and~2 below, we prove that no subsequence of a minimizing sequence can fall into the vanishing or dichotomy alternatives.

For notational convenience, define
\begin{equation}
\Phi_n(x)
:=
\int_\Y e^{-\beta\rho(x,y)}\,\nu_n(\D y).
\end{equation}
Then
\begin{equation}
J_\beta(\nu_n)
=
-\int_\X \log \Phi_n(x)\,\mu(\D x).
\end{equation}

\medskip
\noindent
\textbf{Step 1: Ruling out Vanishing.}
Suppose that a subsequence of $\{\nu_n\}$, still denoted by $\{\nu_n\}$, satisfies the vanishing alternative:
\begin{equation}
\sup_{y\in\Y}\nu_n(B_d(y,R))\to0
\qquad\text{for every }R>0.
\end{equation}
Then for every compact set $K\subset\Y$,
\begin{equation}
\nu_n(K)\to0.
\end{equation}
Indeed, fix $R>0$. Since $K$ is compact, there exist finitely many points $y_1,\ldots,y_N\in\Y$ such that
\begin{equation}
K\subset \bigcup_{i=1}^N B_d(y_i,R).
\end{equation}
Therefore
\begin{equation}
\nu_n(K)
\le
\sum_{i=1}^N \nu_n(B_d(y_i,R))
\le
N\sup_{y\in\Y}\nu_n(B_d(y,R))
\to0.
\end{equation}

For $\mu$-a.e. $x$ and every $L<\infty$, compact-sublevel coercivity gives a compact set
\begin{equation}
K_{x,L}:=\{y\in\Y:\rho(x,y)\le L\}.
\end{equation}
Splitting the integral defining $\Phi_n(x)$ over $K_{x,L}$ and its complement gives
\begin{equation}
\Phi_n(x)
=
\int_{K_{x,L}} e^{-\beta\rho(x,y)}\,\nu_n(\D y)
+
\int_{\Y\setminus K_{x,L}} e^{-\beta\rho(x,y)}\,\nu_n(\D y).
\end{equation}
Since the first integrand is bounded by $1$, and on $\Y\setminus K_{x,L}$ one has $\rho(x,y)>L$, we obtain
\begin{equation}
\Phi_n(x)
\le
\nu_n(K_{x,L})+e^{-\beta L}.
\end{equation}
Letting $n\to\infty$ and using $\nu_n(K_{x,L})\to0$, we get
\begin{equation}
\limsup_{n\to\infty}\Phi_n(x)\le e^{-\beta L}.
\end{equation}
Since $L<\infty$ is arbitrary,
\begin{equation}
\Phi_n(x)\to0
\qquad
\text{for }\mu\text{-a.e. }x.
\end{equation}
Thus
\begin{equation}
-\log\Phi_n(x)\to+\infty
\qquad
\text{for }\mu\text{-a.e. }x.
\end{equation}
By Fatou's lemma,
\begin{equation}
\liminf_{n\to\infty}J_\beta(\nu_n)
=
\liminf_{n\to\infty}
\int_\X -\log\Phi_n(x)\,\mu(\D x)
=
+\infty,
\end{equation}
which contradicts
\begin{equation}
J_\beta(\nu_n)\to m_\beta<\infty.
\end{equation}
Hence no minimizing subsequence can satisfy vanishing.

\medskip
\noindent
\textbf{Step 2: Ruling out Dichotomy.}
Suppose that a subsequence of $\{\nu_n\}$, still denoted by $\{\nu_n\}$, satisfies the dichotomy alternative. 
Thus there exists $\alpha\in(0,1)$ such that, for every sufficiently small $\varepsilon>0$, one can find the dichotomy decomposition
\begin{equation}
0\le \nu_n^1+\nu_n^2\le \nu_n,
\qquad
\nu_n^3:=\nu_n-\nu_n^1-\nu_n^2\ge0,
\end{equation}
with
\begin{equation}
\nu_n^1(\Y)\to\alpha,
\qquad
\nu_n^2(\Y)\to1-\alpha
\end{equation}
up to an asymptotic error of order $\varepsilon$, and
\begin{equation}
\limsup_{n\to\infty}\nu_n^3(\Y)\le C\varepsilon,
\end{equation}
where $C$ is an absolute constant. Most importantly,
\begin{equation}
\operatorname{dist}
\bigl(
\supp(\nu_n^1),\supp(\nu_n^2)
\bigr)\to\infty.
\tag{Sep}
\end{equation}

The support separation in (Sep) is the key geometric input. In the dichotomy alternative, one principal component is localized in a fixed-radius ball, while the other is placed outside larger and larger balls with the same center. If those centers escape to infinity, then the localized component escapes every compact subset of $\Y$. If those centers do not escape, then, by the Heine--Borel property, the fixed-radius balls remain inside a common compact set along a further subsequence, and the separation condition (Sep) forces the other principal component to escape every compact subset of $\Y$. Hence at least one principal component escapes every compact subset of $\Y$.

If both principal components escaped every compact subset of $\Y$, then for $\mu$-a.e. $x$ and every $L<\infty$, the compact set $K_{x,L}$ would eventually be disjoint from both principal supports. Hence, for all sufficiently large $n$,
\begin{equation}
\int_\Y e^{-\beta\rho(x,y)}\,\nu_n^1(\D y)\le e^{-\beta L}\nu_n^1(\Y),
\end{equation}
and similarly,
\begin{equation}
\int_\Y e^{-\beta\rho(x,y)}\,\nu_n^2(\D y)\le e^{-\beta L}\nu_n^2(\Y).
\end{equation}
Moreover,
\begin{equation}
\int_\Y e^{-\beta\rho(x,y)}\,\nu_n^3(\D y)
\le \nu_n^3(\Y).
\end{equation}
Thus
\begin{equation}
\limsup_{n\to\infty}\Phi_n(x)
\le
2e^{-\beta L}+C\varepsilon.
\end{equation}
Letting $L\to\infty$ gives
\begin{equation}
\limsup_{n\to\infty}\Phi_n(x)
\le C\varepsilon
\qquad
\mu\text{-a.e. }x.
\end{equation}
Choosing $\varepsilon>0$ so small that
\begin{equation}
-\log(C\varepsilon)>m_\beta+1
\end{equation}
and applying Fatou's lemma gives
\begin{equation}
\liminf_{n\to\infty}J_\beta(\nu_n)\ge -\log(C\varepsilon)>m_\beta+1,
\end{equation}
contradicting $J_\beta(\nu_n)\to m_\beta$. Therefore not both principal components can escape.

Consequently, after relabeling the two principal components if necessary, exactly one of them escapes every compact subset of $\Y$. Denote this escaping component by $\nu_n^{\mathsf{es}}$, and set
\begin{equation}
\kappa_n:=\nu_n^{\mathsf{es}}(\Y).
\end{equation}
Since $\nu_n^{\mathsf{es}}$ is a principal dichotomy component, for $\varepsilon>0$ sufficiently small there exist constants
\begin{equation}
0<\kappa_0<\kappa_1<1
\end{equation}
such that
\begin{equation}
\kappa_n\in[\kappa_0,\kappa_1]
\end{equation}
for all large $n$.

Define
\begin{equation}
\Phi_n^{\mathsf{es}}(x)
:=
\int_\Y e^{-\beta\rho(x,y)}\,\nu_n^{\mathsf{es}}(\D y).
\end{equation}
Since $\nu_n^{\mathsf{es}}$ escapes every compact set, for $\mu$-a.e. $x$ and every $L<\infty$, the compact set $K_{x,L}$ is eventually disjoint from $\supp(\nu_n^{\mathsf{es}})$. Therefore, for all sufficiently large $n$,
\begin{equation}
\Phi_n^{\mathsf{es}}(x)
\le e^{-\beta L}\nu_n^{\mathsf{es}}(\Y)
\le e^{-\beta L}.
\end{equation}
Letting $L\to\infty$ gives
\begin{equation}
\Phi_n^{\mathsf{es}}(x)\to0
\qquad
\mu\text{-a.e. }x.
\end{equation}
Since $0\le \Phi_n^{\mathsf{es}}(x)\le1$, dominated convergence yields
\begin{equation}\label{eq:es-vanish}
\int_\X \Phi_n^{\mathsf{es}}(x)\,\mu(\D x)\to0.
\end{equation}

Next define the relative contribution of the escaping component by
\begin{equation}
r_n(x):=
\begin{cases}
\dfrac{\Phi_n^{\mathsf{es}}(x)}{\Phi_n(x)}, & \Phi_n(x)>0,\\[1em]
0, & \Phi_n(x)=0.
\end{cases}
\end{equation}
Since $\nu_n^{\mathsf{es}}\le\nu_n$, we have
\begin{equation}
0\le r_n(x)\le1.
\end{equation}
We claim that
\begin{equation}\label{eq:rn-vanish}
r_n\to0
\qquad\text{in }\mu\text{-measure}.
\end{equation}
Because the sequence is minimizing and $m_\beta<\infty$, there exists $C_0<\infty$ such that
\begin{equation}
J_\beta(\nu_n)
=
\int_\X -\log\Phi_n(x)\,\mu(\D x)
\le C_0
\end{equation}
for all large $n$. Fix $\gamma>0$ and $\delta>0$. For any $a\in(0,1)$,
\begin{equation}
\{r_n>\gamma\}
\subset
\{\Phi_n<a\}\cup\{\Phi_n^{\mathsf{es}}>\gamma a\}.
\end{equation}
Indeed, if $r_n(x)>\gamma$ and $\Phi_n(x)\ge a$, then
\begin{equation}
\Phi_n^{\mathsf{es}}(x)=r_n(x)\Phi_n(x)>\gamma a.
\end{equation}
The first set is controlled by Markov's inequality:
\begin{equation}
\mu(\Phi_n<a)
=
\mu(-\log\Phi_n>-\log a)
\le
\frac{C_0}{-\log a}.
\end{equation}
Choose $a>0$ so small that
\begin{equation}
\frac{C_0}{-\log a}<\frac{\delta}{2}.
\end{equation}
For this fixed $a$, \eqref{eq:es-vanish} gives
\begin{equation}
\mu(\Phi_n^{\mathsf{es}}>\gamma a)
\le
\frac{1}{\gamma a}
\int_\X \Phi_n^{\mathsf{es}}(x)\,\mu(\D x)
\to0.
\end{equation}
Hence $\mu(r_n>\gamma)<\delta$ for all sufficiently large $n$, proving \eqref{eq:rn-vanish}.

Now construct a competitor by moving a small fraction of the escaping mass to the complementary part of $\nu_n$. 
For a fixed $\tau\in(0,1)$, define
\begin{equation}
\widetilde\nu_n
:=
(1-\tau)\nu_n^{\mathsf{es}}
+
\left(1+\frac{\tau\kappa_n}{1-\kappa_n}\right)
(\nu_n-\nu_n^{\mathsf{es}}).
\end{equation}
This is a probability measure because
\begin{equation}
(1-\tau)\kappa_n
+
\left(1+\frac{\tau\kappa_n}{1-\kappa_n}\right)
(1-\kappa_n)
=
1.
\end{equation}
Let
\begin{equation}
\widetilde\Phi_n(x)
:=
\int_\Y e^{-\beta\rho(x,y)}\,\widetilde\nu_n(\D y).
\end{equation}
A direct calculation gives
\begin{equation}
\frac{\widetilde\Phi_n(x)}{\Phi_n(x)}
=
1+\tau\,\frac{\kappa_n-r_n(x)}{1-\kappa_n}
\end{equation}
whenever $\Phi_n(x)>0$. Since $0\le r_n\le1$, we have the global lower bound
\begin{equation}
\frac{\widetilde\Phi_n(x)}{\Phi_n(x)}\ge1-\tau.
\end{equation}
On the set
\begin{equation}
E_n:=\{x:r_n(x)\le\kappa_0/2\},
\end{equation}
we have
\begin{equation}
\frac{\widetilde\Phi_n(x)}{\Phi_n(x)}
\ge
1+\frac{\tau\kappa_0}{2}.
\end{equation}
By \eqref{eq:rn-vanish}, $\mu(E_n)\to1$.

Therefore
\begin{equation}
\begin{aligned}
J_\beta(\nu_n)-J_\beta(\widetilde\nu_n)
&=
\int_\X
\log\frac{\widetilde\Phi_n(x)}{\Phi_n(x)}
\,\mu(\D x)\\
&\ge
\mu(E_n)\log\left(1+\frac{\tau\kappa_0}{2}\right)
+
(1-\mu(E_n))\log(1-\tau).
\end{aligned}
\end{equation}
Since $\mu(E_n)\to1$, the right-hand side is bounded below by a positive constant along a tail. Thus, for all sufficiently large $n$,
\begin{equation}
J_\beta(\widetilde\nu_n)<J_\beta(\nu_n)
\end{equation}
with a uniform improvement. This contradicts the minimizing property
\begin{equation}
J_\beta(\nu_n)\to m_\beta.
\end{equation}
Hence dichotomy cannot occur.

\medskip
\noindent
\textbf{Step 3: Compactness and Existence.}
Let $\{\nu_n\}$ be any minimizing sequence. 
By the concentration-compactness alternative, after passing to a subsequence, one of compactness, vanishing, or dichotomy occurs. 
Steps~1 and~2 rule out vanishing and dichotomy. Hence the compactness alternative must occur along a subsequence.

Thus, for every $\eta>0$, there exist $R_\eta<\infty$ and centers $y_n^\eta\in\Y$ such that
\begin{equation}
\nu_n(B_d(y_n^\eta,R_\eta))\ge1-\eta
\end{equation}
along the subsequence.

We first show that, for each fixed $\eta>0$ small enough, the centers $y_n^\eta$ cannot escape every compact subset of $\Y$. Suppose they did. Then the balls $B_d(y_n^\eta,R_\eta)$ would eventually be disjoint from every compact set. For $\mu$-a.e. $x$ and every $L<\infty$, the compact set $K_{x,L}$ would eventually be disjoint from these balls. Therefore
\begin{equation}
\Phi_n(x)
\le
e^{-\beta L}+\eta.
\end{equation}
Letting $L\to\infty$ gives
\begin{equation}
\limsup_{n\to\infty}\Phi_n(x)\le\eta
\qquad
\mu\text{-a.e. }x.
\end{equation}
If $\eta>0$ is chosen so small that
\begin{equation}
-\log\eta>m_\beta+1,
\end{equation}
then Fatou's lemma gives
\begin{equation}
\liminf_{n\to\infty}J_\beta(\nu_n)>m_\beta,
\end{equation}
contradicting $J_\beta(\nu_n)\to m_\beta$.

Therefore the concentration centers do not escape. Since the metric has the Heine--Borel property, for each fixed $\eta$ we can pass to a further subsequence along which the corresponding centers remain in a common compact set. Consequently, the balls \(B_d(y_n^\eta,R_\eta)\) are contained in a common compact set. A standard diagonal argument over \(\eta=1/2^k\) gives a tight subsequence of \(\{\nu_n\}\). Specifically, let
\begin{equation}
\eta_k:=2^{-k},\qquad k\ge1.
\end{equation}
Start from the subsequence on which the compactness alternative holds, and denote its index set by \(S_0\).

For \(k=1\), apply the compactness alternative with \(\eta_1\). There exist \(R_1<\infty\) and centers \(y_n^{(1)}\), \(n\in S_0\), such that
\begin{equation}
\nu_n\bigl(B_d(y_n^{(1)},R_1)\bigr)\ge 1-\eta_1,
\qquad n\in S_0.
\end{equation}
By the previous argument, these centers cannot escape to infinity. Hence there exists an infinite subset
\begin{equation}
S_1\subset S_0
\end{equation}
and a compact set \(C_1\subset\Y\) such that
\begin{equation}
y_n^{(1)}\in C_1,\qquad n\in S_1.
\end{equation}
Define
\begin{equation}
K_1:=\{y\in\Y:d(y,C_1)\le R_1\}.
\end{equation}
Since \(C_1\) is compact and the metric has the Heine--Borel property, \(K_1\) is compact. Moreover,
\begin{equation}
B_d(y_n^{(1)},R_1)\subset K_1,\qquad n\in S_1,
\end{equation}
and therefore
\begin{equation}
\nu_n(K_1)\ge 1-\eta_1,\qquad n\in S_1.
\end{equation}

Next, suppose that \(S_{k-1}\) has been constructed. Apply the compactness alternative to the subsequence indexed by \(S_{k-1}\) with \(\eta_k\). We obtain \(R_k<\infty\) and centers \(y_n^{(k)}\), \(n\in S_{k-1}\), such that
\begin{equation}
\nu_n\bigl(B_d(y_n^{(k)},R_k)\bigr)\ge 1-\eta_k,
\qquad n\in S_{k-1}.
\end{equation}
Again, the centers cannot escape to infinity. Hence there exists an infinite subset
\begin{equation}
S_k\subset S_{k-1}
\end{equation}
and a compact set \(C_k\subset\Y\) such that
\begin{equation}
y_n^{(k)}\in C_k,\qquad n\in S_k.
\end{equation}
Set
\begin{equation}
K_k:=\{y\in\Y:d(y,C_k)\le R_k\}.
\end{equation}
Then \(K_k\) is compact, and
\begin{equation}
\nu_n(K_k)\ge 1-\eta_k,\qquad n\in S_k.
\end{equation}

This gives a nested sequence of infinite index sets
\begin{equation}
S_0\supset S_1\supset S_2\supset\cdots,
\end{equation}
and compact sets \(K_k\subset\Y\) such that
\begin{equation}
\nu_n(K_k)\ge 1-\eta_k
\qquad\text{for all }n\in S_k.
\end{equation}

Now choose the diagonal subsequence as follows. Let \(m_j\) be the \(j\)-th element of \(S_j\), chosen increasingly. Then, for each fixed \(k\), whenever \(j\ge k\), we have
\begin{equation}
m_j\in S_j\subset S_k.
\end{equation}
Therefore,
\begin{equation}
\nu_{m_j}(K_k)\ge 1-\eta_k,
\qquad j\ge k.
\end{equation}

We claim that \(\{\nu_{m_j}\}\) is tight. Let \(\varepsilon>0\). Choose \(k\) so large that
\begin{equation}
\eta_k<\frac{\varepsilon}{2}.
\end{equation}
Then
\begin{equation}
\nu_{m_j}(K_k)\ge 1-\frac{\varepsilon}{2},
\qquad j\ge k.
\end{equation}
This controls the tail of the diagonal subsequence. For the finitely many initial terms \(j<k\), each measure \(\nu_{m_j}\) is tight because \(\Y\) is Polish. Hence for each \(j<k\), there exists a compact set \(F_j\subset\Y\) such that
\begin{equation}
\nu_{m_j}(F_j)\ge1-\varepsilon.
\end{equation}
Define
\begin{equation}
K:=K_k\cup F_1\cup\cdots\cup F_{k-1}.
\end{equation}
This is compact as a finite union of compact sets. For \(j\ge k\),
\begin{equation}
\nu_{m_j}(K)\ge \nu_{m_j}(K_k)\ge1-\frac{\varepsilon}{2}\ge1-\varepsilon,
\end{equation}
and for \(j<k\),
\begin{equation}
\nu_{m_j}(K)\ge\nu_{m_j}(F_j)\ge1-\varepsilon.
\end{equation}
Thus
\begin{equation}
\nu_{m_j}(K)\ge1-\varepsilon
\qquad\text{for all }j.
\end{equation}
Since \(\varepsilon>0\) was arbitrary, the diagonal subsequence \(\{\nu_{m_j}\}\) is tight.

By Prokhorov's theorem, this tight subsequence has a weakly convergent further subsequence,
\begin{equation}
\nu_{n_j}\rightharpoonup \nu^\ast\in\pc(\Y).
\end{equation}
By the lower semi-continuity of \(J_\beta\) with respect to weak convergence, we have
\begin{equation}
J_\beta(\nu^\ast)
\le
\liminf_{j\to\infty}J_\beta(\nu_{n_j})
=
m_\beta.
\end{equation}
Since \(m_\beta\) is the infimum, \(J_\beta(\nu^\ast)=m_\beta\). Therefore \(J_\beta\) attains its infimum.

\medskip
\noindent
\textbf{Step 4: Every minimizing sequence is tight.}
We now prove the stronger statement that every minimizing sequence is tight.

Suppose, toward contradiction, that there exists a minimizing sequence \(\{\nu_n\}\) that is not tight. 
Fix \(y_0\in\Y\) and set
\begin{equation}
K_j:=B_d(y_0,j),
\qquad j\in\mathbb N
\end{equation}
to be closed balls. Since \(d\) has the Heine--Borel property, each \(K_j\) is compact. Moreover,
\begin{equation}
K_1\subset K_2\subset\cdots,
\qquad
\bigcup_{j=1}^\infty K_j=\Y.
\end{equation}
The last equality follows because for every $y\in\Y$, the distance $d(y,y_0)$ is finite, so $y\in K_j$ for every integer $j>d(y,y_0)$.

We also note that every compact subset of $\Y$ is contained in some $K_j$. Since $\{\nu_n\}$ is not tight, there exists $\varepsilon_0>0$ such that no compact set captures mass $1-\varepsilon_0$ uniformly in $n$. Equivalently,
\begin{equation}\label{eq:non-tight}
\forall K\Subset\Y,\qquad
\exists n\in\mathbb N
\quad\text{such that}\quad
\nu_n(K)<1-\varepsilon_0.
\end{equation}
In fact, this failure occurs arbitrarily far along the sequence: for every compact $K\subset\Y$ and every $N\in\mathbb N$, there exists $n\ge N$ such that
\begin{equation}\label{eq:non-tight-strong}
\nu_n(K)<1-\varepsilon_0.
\end{equation}
To see this, suppose the contrary. Then there would exist a compact set $K\subset\Y$ and an integer $N$ such that
\begin{equation}
\nu_n(K)\ge 1-\varepsilon_0
\qquad\text{for all }n\ge N.
\end{equation}
For each of the finitely many indices $1\le n<N$, since $K_j\uparrow\Y$, the continuity of measures from below gives
\begin{equation}
\nu_n(K_j)\uparrow \nu_n(\Y)=1.
\end{equation}
Hence we may choose $j_n$ such that
\begin{equation}
\nu_n(K_{j_n})\ge 1-\varepsilon_0,
\qquad 1\le n<N.
\end{equation}
Now define the compact set
\begin{equation}
K':=K\cup\bigcup_{1\le n<N}K_{j_n}.
\end{equation}
This is compact as a finite union of compact sets. For $n\ge N$,
\begin{equation}
\nu_n(K')\ge \nu_n(K)\ge1-\varepsilon_0,
\end{equation}
while for $1\le n<N$,
\begin{equation}
\nu_n(K')\ge\nu_n(K_{j_n})\ge1-\varepsilon_0.
\end{equation}
Thus
\begin{equation}
\nu_n(K')\ge1-\varepsilon_0
\qquad\text{for all }n,
\end{equation}
which contradicts the choice of $\varepsilon_0$ in \eqref{eq:non-tight}. Therefore \eqref{eq:non-tight-strong} holds.

Using \eqref{eq:non-tight-strong} inductively with $K=K_j$, we can choose a strictly increasing subsequence $\{n_j\}_{j\ge1}$ such that
\begin{equation}
\nu_{n_j}(K_j)<1-\varepsilon_0,
\qquad j\ge1.
\end{equation}
Relabel this subsequence as $\{\nu_j\}$, i.e.,
\begin{equation}
\nu_j:=\nu_{n_j}.
\end{equation}
Then
\begin{equation}\label{eq: non-tight-subseq}
\nu_j(K_j)<1-\varepsilon_0,
\qquad j\ge1.
\end{equation}

This relabeled subsequence has no tight further subsequence. Indeed, suppose that a further subsequence $\{\nu_{j_\ell}\}$ were tight. Then there would exist a compact set $K\subset\Y$ such that
\begin{equation}
\nu_{j_\ell}(K)\ge1-\frac{\varepsilon_0}{2}
\qquad\text{for all sufficiently large }\ell.
\end{equation}
As shown above, since $K$ is compact, there exists $J<\infty$ such that
\begin{equation}
K\subset K_J.
\end{equation}
For all sufficiently large $\ell$, we have $j_\ell\ge J$, and hence
\begin{equation}
K\subset K_J\subset K_{j_\ell}.
\end{equation}
Therefore,
\begin{equation}
\nu_{j_\ell}(K_{j_\ell})
\ge
\nu_{j_\ell}(K)
\ge
1-\frac{\varepsilon_0}{2}
\end{equation}
for all sufficiently large $\ell$, contradicting \eqref{eq: non-tight-subseq}, since
\begin{equation}
1-\frac{\varepsilon_0}{2}>1-\varepsilon_0.
\end{equation}
Thus the bad subsequence has no tight further subsequence.

On the other hand, Step~3 applies to every minimizing sequence and therefore to this bad subsequence. Hence it must have a tight further subsequence. This contradiction proves that the original minimizing sequence must be tight.

Finally, if \(\bar\nu\) is any weak cluster point of any minimizing sequence, then the lower semi-continuity of \(J_\beta\) with respect to weak convergence gives
\begin{equation}
J_\beta(\bar\nu)
\le
\liminf_{n\to\infty}J_\beta(\nu_n)
=
m_\beta.
\end{equation}
Hence \(J_\beta(\bar\nu)=m_\beta\), so every weak cluster point of a minimizing sequence is a minimizer.

\section{A Uniform Nonattainment Scheme and Counterexamples}
\label{app:counterexamples}

This appendix proves the counterexamples stated in the main text. The first example shows that coercivity cannot be omitted directly. The next two examples use a uniform construction to show that mere Polishness without local compactness and measurability without lower semi-continuity are insufficient.

\subsection{Coercivity is essential}

Let $\X=\{\ast\}$, $\Y=\R$, and $\rho(\ast,y)=(1+y^2)^{-1}$. For every finite $y$, one has $e^{-\beta\rho(\ast,y)}<1$. Hence for any $\nu\in\pc(\R)$,
\begin{equation}
\int_\R e^{-\beta\rho(\ast,y)}\,\nu(\D y)<1,
\end{equation}
and therefore $J_\beta(\nu)>0$. On the other hand,
\begin{equation}
J_\beta(\delta_n)=-\log e^{-\beta/(1+n^2)}=\frac{\beta}{1+n^2}\to0.
\end{equation}
Thus the infimum is zero but is not attained.

\subsection{The phantom-endpoint construction}

Let $\X=\{0,1\}$ with $\mu(0)=\mu(1)=1/2$. Let $\Y$ be a measurable reproduction space and suppose that there exist measurable functions
\begin{equation}
q_1:\Y\to(0,1),\qquad q_2:\Y\to[0,\infty)
\end{equation}
such that
\begin{equation}
\rho(0,y)=q_1(y)+q_2(y),
\qquad
\rho(1,y)=1-q_1(y)+q_2(y).
\end{equation}
Assume that there exist sequences $y_n^0,y_n^1\in\Y$ with
\begin{equation}
q_1(y_n^0)\to0,
\quad q_2(y_n^0)\to0,
\qquad
q_1(y_n^1)\to1,
\quad q_2(y_n^1)\to0.
\end{equation}
We show that $J_\beta$ has a finite infimum that is not attained for any $\beta>0$.

Set $c=e^{-\beta}$. For any $\nu\in\pc(\Y)$ define
\begin{equation}
A_\nu:=\int_\Y e^{-\beta(q_1(y)+q_2(y))}\,\nu(\D y),
\qquad
B_\nu:=\int_\Y e^{-\beta(1-q_1(y)+q_2(y))}\,\nu(\D y).
\end{equation}
Then
\begin{equation}
J_\beta(\nu)=-\frac12\log A_\nu-\frac12\log B_\nu.
\end{equation}
For every $y\in\Y$, since $q_1(y)\in(0,1)$ and $q_2(y)\ge0$,
\begin{equation}
e^{-\beta(q_1(y)+q_2(y))}+e^{-\beta(1-q_1(y)+q_2(y))}=e^{-\beta q_2(y)}(e^{-\beta q_1 (y)}+e^{-\beta (1-q_1(y))})<1\cdot (1+c)=1+c.
\end{equation}
Therefore $A_\nu+B_\nu<1+c$, and by the arithmetic-geometric mean inequality,
\begin{equation}
A_\nu B_\nu<\left(\frac{1+c}{2}\right)^2.
\end{equation}
Thus
\begin{equation}
J_\beta(\nu)>-\log\left(\frac{1+e^{-\beta}}{2}\right)
\end{equation}
for every $\nu\in\pc(\Y)$.

On the other hand, take
\begin{equation}
\nu_n:=\frac12\delta_{y_n^0}+\frac12\delta_{y_n^1}.
\end{equation}
Then $A_{\nu_n}\to(1+c)/2$ and $B_{\nu_n}\to(1+c)/2$, hence
\begin{equation}
J_\beta(\nu_n)\to -\log\left(\frac{1+e^{-\beta}}{2}\right).
\end{equation}
Thus the infimum equals this value but is not attained. The missing optimizers are the phantom endpoints corresponding to $(q_1,q_2)=(0,0)$ and $(q_1,q_2)=(1,0)$.

\subsection{Polishness without local compactness is insufficient}

Let $\Y=\ell^2$ with its Hilbert norm, and let $\{e_n\}$ be the standard orthonormal basis. Define
\begin{equation}
r_+(y):=\inf_{n\ge1}\left(\frac1n+\|y-e_n\|\right),
\qquad
r_-(y):=\inf_{n\ge1}\left(\frac1n+\|y+e_n\|\right).
\end{equation}
Both functions are $1$-Lipschitz. For each fixed $y\in\ell^2$,
\begin{equation}
\|y-e_n\|^2=\|y\|^2+1-2\langle y,e_n\rangle\to \|y\|^2+1>0,
\end{equation}
and similarly $\|y+e_n\|^2\to\|y\|^2+1$. Hence $r_+(y)>0$ and $r_-(y)>0$ for every fixed $y$.

Set
\begin{equation}
q_1(y):=\frac{r_+(y)}{r_+(y)+r_-(y)},
\qquad
q_2(y):=(\|y\|^2-1)^2.
\end{equation}
Then $q_1$ and $q_2$ are continuous, $q_1(y)\in(0,1)$, and the resulting distortion is continuous. Moreover $q_2(y)\to\infty$ as $\|y\|\to\infty$, so the distortion has metric growth at infinity. However,
\begin{equation}
q_2(e_n)=q_2(-e_n)=0,
\qquad
q_1(e_n)\to0,
\qquad
q_1(-e_n)\to1.
\end{equation}
The phantom-endpoint construction therefore applies with $y_n^0=e_n$ and $y_n^1=-e_n$. Hence $J_\beta$ has no minimizer on $\pc(\ell^2)$ for any $\beta>0$. This shows that Polishness and metric growth at infinity do not replace local compactness and compact-sublevel coercivity.

\subsection{Measurability without lower semi-continuity is insufficient}

Let $\Y=\R$ with its usual metric and define
\begin{equation}
q_2(y):=(y^2-1)^2.
\end{equation}
Let $q_1:\R\to(0,1)$ be the Borel function
\begin{equation}
q_1(y)=
\begin{cases}
|y-1|, & 0<|y-1|<\frac14,\\[0.4em]
1-|y+1|, & 0<|y+1|<\frac14,\\[0.4em]
\frac12, & \text{otherwise}.
\end{cases}
\end{equation}
Set
\begin{equation}
\rho(0,y)=q_1(y)+q_2(y),
\qquad
\rho(1,y)=1-q_1(y)+q_2(y).
\end{equation}
Then $\rho$ is Borel measurable and $q_2(y)\to\infty$ as $|y|\to\infty$, so the distortion has metric growth at infinity. It is not lower semi-continuous: $q_1(1)=1/2$ whereas $\liminf_{y\to1}q_1(y)=0$, hence
\begin{equation}
\rho(0,1)=\frac12,
\qquad
\liminf_{y\to1}\rho(0,y)=0.
\end{equation}
Taking
\begin{equation}
y_n^0=1+\frac1n,
\qquad
 y_n^1=-1+\frac1n,
\end{equation}
we have
\begin{equation}
q_2(y_n^0)\to0,
\quad q_1(y_n^0)\to0,
\qquad
 q_2(y_n^1)\to0,
\quad q_1(y_n^1)\to1.
\end{equation}
Thus the phantom-endpoint construction applies again, and $J_\beta$ has no minimizer on $\pc(\R)$ for any $\beta>0$.

\end{document}